\documentclass[11pt]{article}

\usepackage{geometry}
\usepackage{amsthm}
\usepackage{amsmath}
\usepackage{amsfonts}
\usepackage{graphicx}
\usepackage{hyperref}
\usepackage{color}
\usepackage{caption}
\usepackage{subcaption}
\usepackage{float}
\usepackage{soul}
\usepackage{multirow}
\usepackage{fullpage} 
\usepackage{xspace}
\usepackage{todonotes}
\usepackage[inline]{enumitem}
\usepackage{microtype,amssymb}
\usepackage{cleveref}
\usepackage{thm-restate}

\usepackage{framed}
\usepackage[ruled]{algorithm}
\usepackage[noend]{algpseudocode}

\algrenewcommand\textproc{}
\newtheorem{theorem}{Theorem}

\newtheorem{corollary}[theorem]{Corollary}
\newtheorem{definition}[theorem]{Definition}

\newtheorem{claim}[theorem]{Claim}
\newtheorem{obs}[theorem]{Observation}
\newtheorem{problem}{Problem}
\newtheorem{observation}[theorem]{Observation}

\def\df{{d_F}}
\def\RR{{\mathbb R}}
\def\NN{{\mathbb N}}
\def\ZZ{{\mathbb Z}}

\def\AAA{{\mathcal A}}

\def \eps{\varepsilon}
\def\bE{{\mathbb E}}

\def \fspann{{\mathrm{fspan_F}}}
\def \spann{{\mathrm{span_F}}}
\def \closure{{\mathrm{closure}}}

\DeclareMathOperator{\cost}{cost}

\DeclareMathOperator*{\argmin}{arg\,min}

\title{Time Series Decomposition using the Fréchet Distance}

\author{
Anne Driemel\thanks{University of Bonn, Germany.}
\and
Jan Höckendorff\thanks{University of Cologne, Germany. Funded by the Deutsche Forschungsgemeinschaft (DFG, German Research Foundation) – Project Number 459420781.}
\and Ioannis Psarros \thanks{Archimedes, Athena Research Center, Greece. This work has been partially supported by project MIS 5154714 of the National Recovery and Resilience Plan Greece 2.0 funded by the European Union under the NextGenerationEU Program.}
\and Christian Sohler \thanks{University of Cologne, Germany.}
}

\date{}

\begin{document}

\maketitle
\begin{abstract}
In this paper, we introduce a new data analysis problem that aims to decompose a set of univariate time series into a small set of $k$ base curves of length at most $l$ such that the sum of Fréchet distances of the time series to a ``Fréchet combination'' of the base curves is minimized. Here, a Fréchet combination allows to combine individually scaled base curves using a $k$-dimensional traversal.
We call the problem of finding a set of optimal base curves the Fréchet decomposition problem and we consider two variants: 
(a) the base curves can be arbitrary curves of bounded length and (b) the curves come from a given finite set of candidate curves. We think of the Fréchet decomposition problem as a Fréchet variant of principal component analysis.
For the case of a single base curve we develop a $(1+\varepsilon)$-approximation algorithm for the Fréchet decomposition problem.
Additionally we give an exact algorithm for the projection distance problem that asks to compute the distance of one given time series to a given set of $k$ base curves. This allows us to design an exact algorithm for the Fréchet decomposition problem for general $k$ when curves come from a fixed candidate set.

\end{abstract}
\section{Introduction}

A fundamental problem in exploratory data analysis and forecasting is to decompose a given signal into its (most important) components. Several methods for this problem are known, including principal component analysis (PCA; also known as Karhunen-Loewe transform or empirical orthogonal functions~\cite{lorenz1956empirical, hannachi2007empirical,wold1987principal}), complex principal component analysis \cite{horel1984complex}, independent component analysis \cite{JUTTEN19911}, functional principal component analysis etc. 
One drawback of principal component analysis is that it cannot deal effectively with individual temporal distortion that may happen for a relevant pattern across the input time series. 
One reason why this may happen is that the signal is not sampled uniformly, but one could also think of physical effects that could lead to such a behaviour. For example, in the area of oceanography, one of the main factors in sea surface height analysis~\cite{uebbing2024alternative} is a change that corresponds to summer/winter, and so depending on the location on the planet, the underlying signal may be shifted. 
Some variants of PCA like the complex principal component analysis, are able to find temporal shifts of patterns in the input \cite{horel1984complex}.  However, this comes at the expense of an intrinsic indeterminacy that complicates the interpretation of the results.
Kernel PCA~\cite{mika1998kernel} can be viewed as addressing similar needs. There are kernels for variants of the Fréchet distance~\cite{takeuchi2021frechet} and DTW~\cite{CuturiVBM07,CuturiB17}.
However, a fundamental problem with this approach is that the decomposition exists only in lifting to a high-dimensional feature space, and there is no reverse mapping from the feature space back to the input space. This is also called the preimage problem~\cite{honeine2011preimage}.
Otherwise classical techniques like spectral analysis \cite{priestley1981spectral}, wavelet analysis \cite{shumway2000time} 
are used to determine patterns with respect to a single time series, which makes them unsuitable for our setting.

In the statistics literature, so-called function registration aims to find suitable warping functions for the input time series to optimize the fit to a template or reference function as a preprocessing step to a (functional) principal component analysis~\cite{wang1997alignment,ramsay1998curve}. This is usually done for each input time series individually by solving a regression problem and is sensitive to the choice of the template function. Sometimes pairwise alignments are used; refer to~\cite{wang2016functional} for a recent survey. More recent approaches combine the fitting of alignments with a principal component analysis~\cite{kneip2008combining,lee2016combined}. However, these approaches typically lack convergence guarantees, not to mention a rigorous running time or complexity analysis.

This paper aims at developing new techniques to decompose a signal into components in a setting where the basic components that shape the signal can be stretched or compressed in time. 
The Fr\'echet distance is well-equipped to handle such distortion. Building upon the work on clustering of time series~\cite{DKS16, buchin2019hardness, CH23, BDGHKLS19, nath2020k, buchin2022coresets, BDR23, driemel2025nearlineartimeapproximationscheme}, we are aiming at developing a signal decomposition technique that uses the Fr\'echet distance.

More concretely, assume we are given a set  $X = \{x_1,\dots,x_n\}$ of time series. For example, each could be a time-ordered sequence of measurements of the same time-varying signal at a specific location. We are looking for $k$ base signals $b_1,\dots, b_k$ that can explain all $x_i$ simultaneously. In other words, we think of $x_i$ as being generated by the $k$ base signals and noise. 
One simple way to phrase this problem is to say the $x_i$ 
can be generated by a linear combination of the $b_i$, and we have Gaussian error. In this case, one would try to find $k$ base curves $b_1,\dots, b_k$ that minimize
\[
\sum_{i=1}^n \min_{(w_1,\dots, w_k)} \|x_i - \sum_{j=1}^k w_j \cdot b_j\|^2.
\]
The latter problem can be solved using principal component analysis.

In our case, even though we are measuring the same underlying process, we must also take into account the situation that the same signals appear in a different form (for example, they might be delayed or distorted).  
We therefore introduce a problem formulation that allows for distorted signals, based on the (discrete) Fr\'echet distance. The discrete Fr\'echet distance between two signals $x =(x_1,\dots, x_n)$ and $y = (y_1,\dots, y_m)$ is defined as the minimum over all aligned traversals of $x$ and $y$ (where one can either advance on $x$ or $y$ or both) of the maximum distance of the matched entries. It can be easily extended to a multidimensional traversal between an input time series $x_i$ and $k$ weighted base curves $w_i b_i$ where $w_i$ denotes the weight of the $i$-th base curve $b_i$. Let $B$ denote such a set of  $k$ base curves. We develop a notion of the span under the Fréchet distance, denoted with $\spann(B)$, which is a set that contains all time series that can be formed as the weighted sum of elements of $B$ under combined traversals. Let $d_F(x_i, \spann(B))$ denote the standard Hausdorff-extension of the distance to the set $\spann(B)$. We can think of this as a distance to a projection onto this set. Then our goal will be to find a set of $k$ base curves $B$ that minimize 
\[
\sum_{i=1}^n d_F(x_i,\spann(B)).
\]
We will refer to the problem of finding $B$ as the Fr\'echet Decomposition Problem. More details will be given in Section \ref{sec:problem}.
In Appendix~\ref{sec:experiments} we validate the problem definition with an small experiment on real data. The problem definition arises from interdisciplinary work in collaboration with geoscientists~\cite{uebbing2024alternative}.

\subsection{Related work}

There is extensive literature on the complexity of computing the Fréchet distance. Computing the distance for two curves of complexity $m$ takes roughly $O(m^2)$ time both for the discrete and the continuous version~\cite{alt1995computing, agarwal2014computing}.
This is optimal under the strong exponential time hypothesis~\cite{B14,abboud2018tighter}, even for 1-dimensional curves~\cite{bringmann2016approximability,buchin2019seth} and it holds even if we allow a (small) constant approximation.

Related to our problem is a variant of the discrete Fréchet distance, which optimizes over all possible translations 
\cite{mosig2005approximately,jiang2008protein,avraham2015faster,bringmann2021discrete}. Here, the state of the art is by Bringmann et al.~\cite{bringmann2021discrete} who give a $\Tilde{O}(m^{4.66\ldots})$ algorithm and complement this result with a lower bound of $m^{4-\eps}$ conditioned on the strong exponential time hypothesis.

Closely related to our work is the problem of clustering time series data. Driemel et al. \cite{DKS16} defined the $(k,l)$-clustering problem for time series as follows: Given a set $P$ of $n$ time series of complexity at most $m$ and parameters $k,l \in \NN$ find $k$ center time series of complexity at most $l$, such that (a) the maximum distance of an element in $P$ to its closest center time series or (b) the sum of these distances is minimized. Variant (a) is referred to  as $(k,l)$-center and (b) as $(k,l)$-median. Under the continuous Fréchet distance, they developed near linear time $(1+\eps)$-approximation algorithms for both clustering variants, assuming $\eps,k$, and $l$ are constants. They complement these algorithmic results with hardness results, showing that both $(k,l)$-median and $(k,l)$-center are NP-hard under continuous Fréchet distance. Approximating $(k,l)$-median for polygonal curves in arbitrary dimensions was recently studied by Buchin et al. \cite{BDR23}. Cheng and Huang give the first $(1+\eps)$-approximation algorithm for $(k,l)$-median under continuous Fréchet distance in $d > 1$~\cite{CH23}. Both clustering problems are also NP-hard under the discrete Fréchet distance and even for the case $k = 1$ \cite{BDGHKLS19} \cite{buchin2019hardness}. Nath and Taylor \cite{nath2020k} then developed a $(1+\eps)$-approximation algorithm for $(k,l)$-median under discrete Fréchet distance, which runs in polynomial time. Driemel et al. recently developed a near-linear time approximation scheme for  $(k,l)$-median clustering under discrete Fréchet, assuming that $k$ and $l$ are constant \cite{driemel2025nearlineartimeapproximationscheme}.

Buchin and Rohde \cite{buchin2022coresets} designed the first coreset construction for $(k,l)$-median under both variants of the Fréchet distance, where the size of the coreset has logarithmic dependence on the number of input curves.
Recently, Cohen-Addad et al. introduced a new technique for coresets  yielding an input-size–independent coreset for $(k,l)$-median~\cite{CDRSS25}.

\subsection{Results \& High Level Approach}
The main conceptual contribution of this paper is the introduction of the Fréchet Decomposition Problem (Problem \ref{prob: decomposition}, Section~\ref{sec:problem}) for which we obtain two main algorithmic results 
that hold for $n$ time series of complexity at most $m$,   approximation parameter $\varepsilon>0$ and
 complexity $l$ of the $k$ base curves. In the following, we assume that $l,k$ and $\varepsilon$ are constants. 
\begin{itemize}
\item
a $(1+\varepsilon)$-approximation
algorithm with $\tilde O(n^2 + nm)$ running time for the Fréchet Decomposition Problem for $k=1$ (Corollary \ref{cor:main1}), and
\item 
a $(1+\varepsilon)$-approximation algorithm with $\tilde O(nm + |C|^k n)$ running time
for the Fréchet Decomposition Problem when the set of base curves is taken from a given finite set of candidate curves $C$ for constant $k\ge 1$ 
(Corollary \ref{cor:main2}).
\end{itemize}

 We consider the setting that $k=1$ in Section~\ref{sec:constantfactor}. Here we give a constant-factor approximation algorithm with running time $\Tilde{O}(n^2m)$ for Problem \ref{prob: decomposition}  
(Theorem~\ref{thm: constant approx k equals 1}) and extend this algorithm to a $(1+\eps)$-approximation algorithm with running time in $ \tilde{O}(nm^{4.5} +n^2m)$ 
(Theorem~\ref{thm:epsapprox}). To achieve this result, we exploit that the problem definition allows us to control the infinity norms of the base curves and input time series, which is expressed by the key Lemmas \ref{lem: scale base time series} and \ref{lem: scale single input time series}. For the case $k = 1$, this allows us to show that either weight $1$ or $-1$ is a constant approximation with respect to the optimal weight if the base curve and input time series are normalized appropriately.
We then compute the minimum-error $l$-simplifications of the (appropriately normalized) input time series and choose the simplification that minimizes a simplified variant of our objective function where weights are restricted to $\{-1,1\}$ to get a constant approximation for the Fréchet decomposition problem with $k=1$. This is then further refined to a $(1+\varepsilon)$-approximation by discretizing the area around the vertices of the constant approximation with an appropriately chosen grid.

In Section~\ref{sec:exact}, we give 
an exact algorithm for the Fréchet Projection Problem (Problem \ref{prob: distance}, Section~\ref{sec:problem}) that computes
the cost of a set of time series with respect to a given set of base curves. To do so, we start with a dynamic programming approach to solve a variant of Problem \ref{prob: distance}, where the weights are fixed (Lemma \ref{lem: unweighted projection error}). To extend this to the weighted setting, we construct an arrangement of hyperplanes such that for each weight vector in the same cell of the arrangement, it holds that the dynamic program takes the same sequence of decisions.
Given a cell, we can then construct a linear program, based on the corresponding sequence of decisions in the dynamic program, to find the best weights in the given cell. By doing so for each cell in the arrangement, we obtain an
algorithm that has running time $O( m^{2k+2.5})$ (Theorem \ref{lem: weights and traversal matrices}).
This algorithm then leads to an exact algorithm for 
the Fréchet Decomposition Problem in the setting where the set of base curves must come from a given finite set $C$. The running time of the algorithm is in $O(|C|^k n m^{2k+2.5})$  (Corollary~\ref{cor:projection}). 

We combine the results of both sections with a dimension reduction result of Driemel et al. (Theorem 2 in \cite{driemel2025nearlineartimeapproximationscheme})  that allows us to improve the dependency on $m$ in the running times.

In Appendix \ref{sec: greedy} we discuss a natural greedy approach for the Fréchet Decomposition Problem and show that the approximation factor of such an approach is unbounded.

\section{Preliminaries}

We represent a time series of length $m$ as an $m$-dimensional vector $x \in \RR^m$ over the reals and refer to $m$ as the \emph{complexity} of the time series.

In the following, we will introduce the discrete Fréchet distance. To do so we start with the standard definition that uses the notion of traversals. 
 
\begin{definition}[traversal]
Given two time series $x = (x_1,\dots,x_{m}) \in \RR^{m}$ and $y = (y_1,\dots,y_{l}) \in \RR^{l}$ a \emph{traversal} $T$  is a sequence of index pairs $(i,j) \in [m]\times[l]$ satisfying:
\begin{enumerate}
\item $T$ starts with pair $(1,1)$
\item $T$ ends with pair $(m,l)$ 
\item if $(i,j)$  appears in $T$, it can only be followed by either $(i,j+1), (i+1,j)$ or $(i+1,j+1)$.
\end{enumerate}
If a pair $(i,j)$ appears in $T$, we say that $x_i$ and $y_j$ are \emph{matched} in traversal $T$.
\end{definition}
Next, we state an alternative notation that uses matrices to model the traversals, which was introduced in \cite{driemel2025nearlineartimeapproximationscheme}. Here, a traversal is encoded by special matrices that map an $l$-dimensional vector $x$ to a $t$-dimensional space by copying some of the entries of $v$ while keeping their relative order. These traversal matrices are defined as follows.

\begin{definition}[traversal matrices]
$M \in \{0,1\}^{t \times l}$ is a traversal matrix if it satisfies: 

\begin{enumerate}
    \item The rows of $M$ are standard unit vectors,
    \item $M(1,1) = 1$,
    \item $M(t,l) = 1$, and
    \item For rows $i \in [t-1]$ and columns $ j \in [l-1]$ it holds that if $M(i,j) = 1$ then either $M(i+1,j+1) = 1$ or $M(i+1,j) = 1$.  
    If $i\in [t-1]$ and $j=\ell$ then $M(i+1,j)=1$. 
\end{enumerate}
\end{definition}

For $t,l \in \NN$ we denote with $\mathcal{M}^t_l$ the set of all traversal matrices of dimensions $(t,l)$. 
A pair of traversal matrices 
$(M_m, M_l)$
has \emph{matching dimension}
for two time series $x\in \mathbb R^m$ and $y\in \mathbb R^l$, if
there is $t \in \mathbb N$ such that $t\ge\max(m,l)$,
$M_m \in \mathcal M_{m}^t$, and $M_l\in \mathcal M_{l}^t$.
When the dimensions $t$ and $l$ are clear from the context, then we will not mention $\mathcal{M}^t_l$ explicitly. Note that
one can represent any traversal of $x=(x_1,\dots, x_m)$ and $y=(y_1,\dots, y_l)$ by a pair of traversal matrices
$(M_m, M_l)$ of matching dimension. 
Additionally the $t$-dimensional vectors 
$M_m \cdot (1,2,\dots, m)^T$ and $M_l \cdot (1,2,\dots, \ell)^T$
denote the index of the entry
of $x$ and $y$ that is mapped to the corresponding 
entry in the $t$-dimensional vectors $M_m x$ and $M_l y$, respectively. This allows us to identify $(M_m, M_l)$ with a sequence $T'$ of index pairs whose $r$-th entry
is the pair $(i,j)$ where $i$
is the $r$-th entry of 
$M_m \cdot (1,2,\dots, m)^T$ and $j$ is the $r$-th entry of $M_l \cdot (1,2,\dots, l)$. One obtains a traversal by removing multiple neighboring occurrences of pairs $(i,j)$ from the sequence $T'$. For a traversal matrix $M$ of dimensions $(t,m)$, we will call the vector $M \cdot (1,2,\dots,m)^T$ the \emph{traversal identifier} of $M$.

\begin{definition} [Discrete Fr\'echet distance]
Given time series $x \in \RR^m,y \in \RR^l$ the 
 \emph{discrete Fréchet distance} $d_F(x,y)$ between $x$ and $y$ is defined as
    \[d_F(x,y) = \min \left \Vert M_m x - M_l  y \right \Vert_\infty,\]
where the minimum is over all pairs of traversal matrices $M_m, M_l$ with matching 
dimension. As a standard extension we define for any set $Y \subseteq \bigcup_{l \in \NN} \RR^l$
\[d_F(x,Y) = \min_{y\in Y} d_F(x,y).\] 
\end{definition}

We introduce the notion of the Fréchet span of a set of time series $Y =\{y_1,\dots, y_k\}$. The span consists of all summations of time series that are formed from $y_1,\dots,y_k$ by applying traversal matrices of matching dimension to each $y_i$.

\begin{definition} [Fréchet closure]
    Let $y \in \RR^l$, the Fréchet $t$-closure of $y$ is defined as
\[
\closure_t(y) =  \left\{ M y ~|~  M \in \mathcal{M}^t_l \right\}.
\] 
It is the set of all Fréchet expansions to vectors of length $t$.
The closure is the union of the $t$-closures over all $t \in \NN$.
\end{definition}

\begin{observation}\label{obs:frechetclosureinftydist}
The discrete Fr\'echet distance between $x$ and $y$ can be written as the minimum $\ell_{\infty}$ distance between the two closures of $x$ and $y$:
\[
d_F(x,y) = \min_{t \in \NN} \left\{~ \| \hat{x}-\hat{y} \|_{\infty}    ~|~  \hat{x} \in \closure_t(x), \hat{y} \in \closure_t(y)\right\}
\]
\end{observation}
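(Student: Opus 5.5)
The plan is to unfold both sides and observe that they range over exactly the same set of vector pairs. By the definition of the discrete Fréchet distance, $d_F(x,y)$ is the minimum of $\|M_m x - M_l y\|_\infty$ over all pairs $(M_m, M_l)$ of traversal matrices of matching dimension. This means there is some $t \ge \max(m,l)$ with $M_m \in \mathcal{M}^t_m$ and $M_l \in \mathcal{M}^t_l$. By the definition of the Fréchet $t$-closure, as $M_m$ ranges over $\mathcal{M}^t_m$ the vector $\hat x = M_m x$ ranges over exactly $\closure_t(x)$, and likewise $\hat y = M_l y$ ranges over $\closure_t(y)$. So for each fixed $t$ the two inner minimizations are over the same set of values $\|\hat x - \hat y\|_\infty$. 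Taking the union over $t$ then gives equality of the two minima.

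The one point that needs care is the index range of $t$. The right-hand side allows every $t \in \NN$, while the matching-dimension condition requires $t \ge \max(m,l)$. I would check that for $t < \max(m,l)$ one of the sets $\mathcal{M}^t_m$ or $\mathcal{M}^t_l$ is empty, so these $t$ contribute nothing. Suppose $t < m$ and $M \in \mathcal{M}^t_m$. Condition 4 of the definition of traversal matrices says the column index of the $1$ in row $i+1$ exceeds that in row $i$ by at most one. Starting from column $1$ in row $1$, row $t$ therefore has its $1$ in column at most $t < m$, which contradicts $M(t,m)=1$. Hence both formulations minimize over the same family, namely all pairs $(\hat x,\hat y) \in \closure_t(x)\times\closure_t(y)$ with $t \ge \max(m,l)$.

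Finally, I would note that both minima are attained, so writing $\min$ rather than $\inf$ is justified. Any pair with $t$ large contains consecutive rows where both matrices repeat the same column. Deleting such a duplicated row yields a valid pair of traversal matrices with the same $\ell_\infty$ value. Hence it suffices to consider $t \le m+l-1$, where there are only finitely many traversal matrices, and the minimum exists. I expect no step to be a real obstacle. The only mildly delicate part is this bookkeeping about the admissible $t$ and the finiteness, which reduces the union over all $t\in\NN$ to a finite minimization.
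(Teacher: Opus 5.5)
Your proposal is correct and is the direct unfolding of the definitions that the paper leaves implicit; the paper states this as an observation without proof. Your two extra checks are both sound: $\mathcal{M}^t_m$ is empty for $t<m$, and deleting rows where both matrices repeat their column lets you restrict to $t\le m+l-1$, so the minimum is attained.
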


\begin{definition} [Fréchet span]
For $Y=\{y_1,\dots, y_k\} \subset \RR^l$, we define the Fréchet span of $Y$ as the set of linear combinations of the elements of the closures of $y_i \in Y$.
\[
\spann(Y) = \bigcup_{w \in \RR^k} \bigcup_{t \in \NN } \left\{ \sum_{i=1}^k w_i x_i ~|~ x_i 
\in \closure_t(y_i) \right\}
\] 
\end{definition}

Later in the paper, we will use the concept of minimum-error $l$-simplification, which can be computed in $O(l  m  \log(m)  \log(m/l))$ time as in \cite{bereg2008simplifying} and is defined as follows.

\begin{definition}[minimum-error $l$-simplification]

A 
time series $x' \in \RR^l$ is a minimum-error $l$-simplification of 
a time series $x \in \RR^m$, 
 
if for any time series $y \in \RR^l$, 
$\df(x,x') \leq \df(x,y)$.
\end{definition}

We will further utilize the following theorem of Driemel et al.~\cite{driemel2025nearlineartimeapproximationscheme} to reduce the complexity of the input time series to improve the running times of our results.

\begin{theorem}[\cite{driemel2025nearlineartimeapproximationscheme}]\label{thm: dim_reduction}
Let $\varepsilon \in (0,1)$ and $ l\in \NN$ be constants. There exists a constant $d_0 = d_0(\varepsilon, l)$ such that for arbitrary $m\in \NN$ and input time series $x \in \RR^m$ one can compute in time $O(m \log^2 m)$ 
a time series $z \in \RR^{d_0}$ such that for all $y \in \RR^l$,
\begin{equation*}
    (1-\varepsilon) d_{F}(x,y) \leq d_{F}(z,y) \leq (1 + \varepsilon)  d_{F}(x,y).
\end{equation*}
\end{theorem}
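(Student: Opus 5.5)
The statement is quoted from \cite{driemel2025nearlineartimeapproximationscheme}, but I would reprove it along the following lines. First compute a minimum-error $l$-simplification $x'$ of $x$ with the algorithm of \cite{bereg2008simplifying}, in $O(l m \log m \log(m/l)) = O(m\log^2 m)$ time for constant $l$. Set $\delta^* = \df(x,x')$. By definition of a minimum-error simplification, every $y\in\RR^l$ satisfies $\df(x,y)\ge \delta^*$. Hence every modification of $x$ that changes Fréchet distances only \emph{additively} by at most $\varepsilon\delta^*$ already yields the required multiplicative guarantee for all $y\in\RR^l$. If $\delta^*=0$, then $x$ itself has at most $l$ distinct consecutive values; we take $z$ to be $x$ with consecutive duplicates removed, since removing duplicates does not change $\df$.

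Second, I would round. Replace every entry of $x$ by the nearest multiple of $\varepsilon\delta^*/2$, and then delete consecutive duplicates. By the triangle inequality for $\df$ (the vertex-wise perturbation has $\df\le\|\cdot\|_\infty$), this changes every distance by at most $\varepsilon\delta^*/2$. Next, take the optimal traversal between $x$ and $x'$. It partitions $x$ into at most $l$ contiguous blocks, each matched to a single vertex of $x'$. Consequently each block has range at most $2\delta^*$, and after rounding each block uses only $O(1/\varepsilon)$ distinct levels.

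Third, and this is the main obstacle, I would bound the length inside each block by a function of $\varepsilon$ and $l$ alone. Rounding does not prevent a block from zig-zagging arbitrarily often among its few levels. For this step I would use a shortcutting lemma for 1D discrete Fréchet distance. The lemma should say that if $x_i,x_{i+1},x_{i+2},x_{i+3}$ satisfy $\min(x_i,x_{i+3})\le x_{i+1},x_{i+2}\le\max(x_i,x_{i+3})$, that is, the middle zig-zag is nested inside the range of its endpoints, then deleting $x_{i+1},x_{i+2}$ does not increase $\df(\cdot,y)$ for any $y$. The matching direction of the inequality would follow symmetrically, or by first restricting attention to the signature vertices, that is, local extrema whose amplitude exceeds $\varepsilon\delta^*$.

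Proving this lemma is a local rerouting of the traversal. Any $y$-vertex matched to the deleted entries is within distance $r$ of values lying between values that are matched within $r$ on either side, so the traversal can be rerouted through $x_i$ and $x_{i+3}$. After exhaustive shortcutting, the sequence within a block must be alternately expanding or contracting in range. With only $O(1/\varepsilon)$ levels, this gives length $O(1/\varepsilon)$ per block. Hence $z$ has length $d_0=O(l/\varepsilon)$, which we may pad to exactly $d_0$ by repeating the last entry.

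Finally, I would check that all steps run in $O(m\log^2 m)$ time. The simplification dominates; rounding and shortcutting are linear, since shortcutting can be done with a stack-based single pass. I would also rescale $\varepsilon$ by a constant so that the accumulated additive errors combine into the claimed $(1\pm\varepsilon)$ factors.
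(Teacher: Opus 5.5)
The paper does not prove this statement; it imports it as a black box from \cite{driemel2025nearlineartimeapproximationscheme}, so your proposal has to stand on its own.

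Your first two steps are sound. Since $d_F(x,y)\ge\delta^*$ for all $y\in\RR^l$, an additive error of $O(\varepsilon\delta^*)$ gives the multiplicative guarantee. Rounding moves each entry by at most $\varepsilon\delta^*/4$. Each block has range at most $2\delta^*$ and hence only $O(1/\varepsilon)$ levels.

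\textbf{The gap is Step~3,} the only step that bounds the length of $z$. Your shortcutting lemma is false for the discrete Fréchet distance. Take the nested zig-zag $x=(0,0.8,0.2,1)$ and $y=(0,0.5,1)$. The traversal $(1,1),(2,2),(3,2),(4,3)$ gives $d_F(x,y)=0.3$, but $d_F((0,1),y)=0.5$, because the middle vertex of $y$ has lost its matching partners. In the discrete setting, the interior vertices of a zig-zag are exactly what lets intermediate vertices of $y$ be matched inside it, so deleting them can increase distances. The other direction is not symmetric either: collapsing an oscillation can also \emph{decrease} distances.

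This cannot be absorbed into the tolerance $\varepsilon\delta^*$. Inside a block the damage is of order $\delta^*$, in both directions. Take $l=5$ and $1/\varepsilon\in\NN$ (so rounding changes nothing). Let $x$ be $(0,1,0,1)$ followed by $(20,21)$ repeated $N\ge 3$ times. No $y\in\RR^5$ can resolve $2N>5$ alternations, so $\delta^*=\frac12$.
\begin{itemize}
\item Suppose the computed simplification is $(\frac12,20.5,20.5,20.5,20.5)$. Then $(0,1,0,1)$ is a single block and is collapsed to $(0,1)$. For $y=(0,1,0,1,20.5)$ you have $d_F(x,y)=\frac12$. But $d_F(z,y)\ge 1$: the vertices $y_1,\dots,y_4$ can only be matched to the prefix $(0,1)$ of $z$, and monotonicity forces $y_2=1$ or $y_3=0$ onto the wrong value.
\item Suppose instead the simplification is the equally optimal $(0,1,0,1,20.5)$. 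Then the last block collapses to $(20,21)$. For $y=(\frac12,\frac12,\frac12,20,21)$ you get $d_F(z,y)=\frac12$ while $d_F(x,y)=1$.
\end{itemize}

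What is missing is a compression argument that takes $l$ into account. Up to $l$ vertices of $y$ may be matched inside a single block. The compressed block must therefore reproduce, up to $O(\varepsilon\delta^*)$, how every sequence of at most $l$ windows can cover it. A local deletion rule that forgets how many oscillations a block contains cannot do this. As written, your length bound $d_0=O(l/\varepsilon)$ and the stack-based linear-time implementation both rest on the false lemma, so the proposal does not establish the theorem.
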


\textbf{Hyperplane arrangements.} A \emph{hyperplane} is a set $\{x \in \RR^d : \langle a, x\rangle = b\}$ where $a \in \RR^d \setminus \{0\}$ and $b \in \RR$. A \emph{half-space} is a set $\{x \in \RR^d : \langle a, x\rangle \leq b\}$ where $a \in \RR^d \setminus \{0\}$ and $b \in \RR$, and an \emph{open half-space}
is a set $\{x \in \RR^d : \langle a, x\rangle < b\}$ where $a \in \RR^d \setminus \{0\}$ and $b \in \RR$. An \emph{arrangement} $\AAA(H)$ of a finite set $H$ of hyperplanes in $\RR^d$ is a partition
of $\RR^d$ into cells: a $d$-dimensional cell in $\AAA(H)$ is a maximal connected
region of $\RR^d$
not intersected by any hyperplane in $H$. Unless otherwise stated, we refer to $d$-dimensional cells as cells. A cell $F$ of $\AAA(H)$ can be described
by a sign vector in $\{-1,+1\}^{|H|}$ which is defined with respect to a fixed orientation of each hyperplane: each $h\in H$ partitions $\RR^d$ into three regions: $h$, and the two
open half-spaces determined by it; one of them is arbitrarily assigned to $+1$ and the other one is assigned to $-1$. For a cell $F$ of an arrangement $\AAA(H)$, $\mathrm{cl}(F)$ denotes its closure.

\section{Problem statement}
\label{sec:problem}
Our goal is to decompose our set of input time series into base curves (plus error). For a set of base curves $B=\{b_1,\dots, b_k\}\subset \RR^l$, the error associated with an input curve $x$ is equal to the distance between $x$ and the Fr\'echet span of $B$, $\spann(B)$. Our goal is to minimize the sum of such errors over all input curves. This is formalized as follows.

\begin{problem} [Fr\'echet Decomposition] \label{prob: decomposition}
Let $k,l\ge 1$ be integers.
Given a set $P \subset \RR^m$ of $n$ time series,
find a set $B \subset \RR^l$ of $k$ base curves that minimize
\[
\cost(P,B) = \sum_{x \in P}  d_F(x,\spann(B)).
\]

\end{problem}

A closely related, but likely simpler problem asks for the ``projection'' of a given input time series $x$ to a fixed set of base curves. 
\begin{problem}[Fr\'echet Projection Distance]\label{prob: distance}
   Let $x\in \RR^m$ be a time series and $B=\{b_1,\dots, b_k\}$ be a set of time series from $\RR^l$, where $k$ and $l$ are fixed integers. The projection distance problem is to compute $d_F(x,\spann(B))$.

\end{problem}

\section{$(1+ \varepsilon)$-approximation for finding a single base curve $(k=1)$}\label{sec:epsapprox}\label{sec:constantfactor}

In this section, we present an algorithm achieving a $(1+\varepsilon)$-approximation for the Fr\'echet Decomposition problem with $k=1$. Our approach begins by computing a constant approximation, which is subsequently refined to obtain a $(1+\varepsilon)$-approximation.

We begin our exposition with the algorithm that computes a constant approximation. \Cref{alg: constantApproxK1} first computes minimum-error $l$-simplifications of the (appropriately normalized) input time series and chooses the simplification that minimizes a simplified variant of our objective function where weights are restricted to $\{-1,1\}$. The first goal of this section is to prove correctness and bound the running time of \Cref{alg: constantApproxK1}.

\begin{algorithm}[H]
\caption{\label{alg: constantApproxK1}}
\begin{algorithmic}[1]
\Procedure{ \texttt{ConstantApprox} }
{$ P \subset \RR^m$, $l \in \NN$}
\For{\textbf{each} $x \in P$} 
\State$y_x \gets$ minimum-error $l$-simplification of $\frac{x}{\Vert x \Vert_\infty}$
\EndFor
\State $min\gets \infty$
\For{\textbf{each} $x \in P$} 
\State $c\gets  \sum_{z\in P}\Vert z \Vert_\infty \cdot\min_{w\in \{-1,1\}} d_F(\frac{z}{\Vert z \Vert_\infty},w\cdot y_x)$ 
\If{$min >c$}
\State $min \gets c$
\State $y\gets y_x$
\EndIf
\EndFor
\State \Return $y$
\EndProcedure
\end{algorithmic}
\end{algorithm}

 Missing proofs of this section can be found in  \Cref{app:constantfactor}. 
We begin with two lemmas that show the effect on the cost after either scaling the base curves or the input time series. Note that while these are stated for general $k$, we will only use them for $k=1$. 

\begin{restatable}{lemma}{lemmarescaledbasis}
\label{lem: scale base time series}
Given time series $x \in \RR^m$ and $B= \{b_1,\dots,b_k\} \subset \RR^l$ with $l \leq m$. Let $\beta  \in (\RR \setminus\{0\})^k$ and $B' = \{\beta_1 b_1,\dots,\beta_k b_k\}$ then it holds that
\[
d_F(x,\spann(B)) =  d_F(x,\spann(B')).
\]
\end{restatable}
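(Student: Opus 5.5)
The plan is to show the two sets are equal, $\spann(B)=\spann(B')$. Once this holds, the two distances agree, since $d_F(x,\cdot)$ applied to a set is just the minimum of $d_F(x,y)$ over the elements $y$ of that set. The condition $l\le m$ plays no role in this argument.

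The key step is that traversal matrices act linearly. Fix $t\in\NN$ and a traversal matrix $M\in\mathcal M^t_l$. For any scalar $\beta_i$ we have $M(\beta_i b_i)=\beta_i (M b_i)$. Hence
\[
\closure_t(\beta_i b_i)=\{\beta_i\hat b ~|~ \hat b\in \closure_t(b_i)\}.
\]

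Now take an element of $\spann(B')$. It has the form $\sum_{i=1}^k w_i M_i(\beta_i b_i)$ for some $w\in\RR^k$, some $t\in\NN$, and some $M_i\in\mathcal M^t_l$. By the previous step it equals $\sum_{i=1}^k (w_i\beta_i) M_i b_i$. This is an element of $\spann(B)$ with weight vector $w'=(w_1\beta_1,\dots,w_k\beta_k)$, using the same $t$ and the same traversal matrices $M_i$. So $\spann(B')\subseteq\spann(B)$.

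For the reverse inclusion, use $\beta_i\neq 0$. Each $b_i$ equals $\beta_i^{-1}(\beta_i b_i)$, so $B$ is obtained from $B'$ by rescaling with $\beta^{-1}=(\beta_1^{-1},\dots,\beta_k^{-1})\in(\RR\setminus\{0\})^k$. The same argument with the roles of $B$ and $B'$ swapped gives $\spann(B)\subseteq\spann(B')$. The spans are equal, and so
\[
d_F(x,\spann(B))=\min_{y\in\spann(B)}d_F(x,y)=\min_{y\in\spann(B')}d_F(x,y)=d_F(x,\spann(B')).
\]

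There is no real obstacle. The only point needing care is that the weights range over all of $\RR^k$, so the reparametrization $w\mapsto(w_i\beta_i)_i$ is a bijection of $\RR^k$; this is exactly where $\beta_i\neq0$ is needed. If one wants to avoid assuming that the minimum defining $d_F(x,\spann(B))$ is attained, the argument works unchanged with infima, because the two sets are identical.
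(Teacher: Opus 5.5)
Your proof is correct and follows essentially the paper's argument. The paper likewise uses the linearity of traversal matrices and the bijective reparametrization of the weights ($w'_j = w_j/\beta_j$, which is valid because $\beta_j \neq 0$). The only difference is presentation: the paper writes a chain of equalities of the minimum over weights and traversal matrices, whereas you state the set identity $\spann(B)=\spann(B')$ directly.
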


\begin{proof}
\begin{align*}
d_F(x,\spann(B))
&=   \min_{\substack{w_1,\dots, w_k \in \RR \\ M_0,M_1,\dots,M_k}} \Vert  M_0 \cdot x - \sum^k_{j=1} w_j \cdot M_j \cdot b_j \Vert_\infty\\
&= 
\min_{\substack{w_1,\dots, w_k \in \RR \\ M_0,M_1,\dots,M_k}} \Vert  M_0 \cdot x - \sum^k_{j=1} w_j \cdot M_j \cdot \frac{\beta_j}{\beta_j} \cdot b_j \Vert_\infty \\
&= 
\min_{\substack{w_1,\dots, w_k \in \RR \\ M_0,M_1,\dots,M_k}} \Vert  M_0 \cdot x - \sum^k_{j=1} \frac{w_j}{\beta_j} \cdot M_j \cdot \beta_j \cdot b_j \Vert_\infty  \\
&=  \min_{\substack{w'_1,\dots, w'_k \in \RR \\ M_0,M_1,\dots,M_k }} \Vert  M_0 \cdot x - \sum^k_{j=1} w'_j \cdot M_j \cdot \beta_j \cdot b_j \Vert_\infty\\
&= d_F(x,\spann(B'))
\qedhere
\end{align*}
\end{proof}

\begin{restatable}{lemma}{lemmascaledcurve}
    \label{lem: scale single input time series}
Given time series $x \in \RR^m$ and 
$B = \{b_1,\dots,b_k\} \subset  \RR^l$ with $l \leq m$. For any $\alpha \in \RR \setminus\{0\}$ it holds that
\[d_F(x,\spann(B)) = \frac{1}{|\alpha|}  d_F(\alpha \cdot  x,\spann(B)).\]
\end{restatable}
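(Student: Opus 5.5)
I will mirror the computation in the proof of \Cref{lem: scale base time series}: expand the distance to the span as a minimization over weights and traversal matrices, then use the homogeneity of the $\ell_\infty$ norm. By the definition of $\spann(B)$ and Observation~\ref{obs:frechetclosureinftydist},
\[
d_F(\alpha x,\spann(B)) = \min_{\substack{w_1,\dots,w_k\in\RR\\ M_0,M_1,\dots,M_k}} \Big\Vert M_0\cdot \alpha x - \sum_{j=1}^k w_j M_j b_j\Big\Vert_\infty ,
\]
where the minimum ranges over traversal matrices of a common row dimension $t$ (all $t\in\NN$). The same expansion holds for $x$ itself.

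First, since the traversal matrix $M_0$ is linear, I rewrite the argument of the norm as $\alpha\big(M_0 x - \sum_j \frac{w_j}{\alpha} M_j b_j\big)$, which is valid because $\alpha\neq 0$. Second, I use $\Vert \alpha v\Vert_\infty = |\alpha|\,\Vert v\Vert_\infty$ to pull out the factor $|\alpha|$. Third, I reparametrize with $w_j' = w_j/\alpha$. The map $w\mapsto w/\alpha$ is a bijection of $\RR^k$, and the traversal matrices are untouched, so the minimum over $(w',M_0,\dots,M_k)$ is exactly $d_F(x,\spann(B))$. This gives
\[
d_F(\alpha x,\spann(B)) = |\alpha|\, d_F(x,\spann(B)),
\]
and dividing by $|\alpha|$ yields the claim.

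The only step needing care is that the minimum over weights and matrices is attained, or is at least an infimum, and that the reparametrization preserves the feasible set. The feasible set of matrices is unchanged. Because the weights range over all of $\RR^k$ and the map is bijective, the infimum is preserved exactly either way. So there is no real obstacle: the argument follows the same pattern as the previous lemma, with the scaling now applied to the input curve instead of to the base curves.
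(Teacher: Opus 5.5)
Your proposal is correct and follows essentially the same route as the paper. The paper also writes the distance as a minimum over weights and traversal matrices, factors out $|\alpha|$ using homogeneity of $\Vert\cdot\Vert_\infty$, and absorbs the scaling into the weights through the bijection $w_j \mapsto \alpha w_j$; you run the same computation starting from $\alpha x$ and use $w_j \mapsto w_j/\alpha$.
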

\begin{proof}
\begin{align*}
d_F(x,\spann(B))
&= \min_{\substack{w_1,\dots, w_k \in \RR \\ M_0, M_1,\dots,M_k }} \Vert  M_0 \cdot  x - \sum^k_{j=1} w_j \cdot M_j \cdot b_j \Vert_\infty \\
&= 
\min_{\substack{w_1,\dots, w_k \in \RR \\ M_0, M_1,\dots,M_k }} \Vert  \frac{\alpha}{\alpha} \cdot M_0 \cdot x - \sum^k_{j=1} w_j \cdot M_j \cdot \frac{\alpha}{\alpha} \cdot b_j \Vert_\infty\\
&=
\frac{1}{|\alpha|}  \min_{\substack{w_1,\dots, w_k \in \RR \\ M_0,M_1,\dots,M_k }} \Vert  \alpha \cdot M_0 \cdot  x - \sum^k_{j=1} w_j \cdot \alpha \cdot M_j \cdot b_j \Vert_\infty \\
&= \frac{1}{|\alpha|}  \min_{\substack{w'_1,\dots, w'_k \in \RR \\ M_0, M_1, \dots,M_k }} \Vert  \alpha \cdot M_0 \cdot  x - \sum^k_{j=1} w'_j \cdot M_j \cdot b_j \Vert_\infty \\
&= \frac{1}{|\alpha|} d_F(\alpha \cdot x,\spann(B))
\qedhere
\end{align*}
\end{proof}

We will utilize Lemma \ref{lem: scale base time series} and Lemma \ref{lem: scale single input time series} to assume without loss of generality that all input time series and base curves have their infinity norm realized by a non-negative value, since we can always multiply them by $-1$ without changing the cost. Additionally, by Lemma \ref{lem: scale base time series}, we will assume in the following that the base curves have $\ell_{\infty}$ norm $1$.

Next, we make an observation on lower and upper bounds of the discrete Fréchet distance by identifying the value pairs that have the smallest and largest differences. 
In the following, for $v=(v_1,\ldots,v_r)\in \RR^r$ we will write $\max(v) = \max_{1\le i \le r} v_i$ and $\min(v) = \min_{1\le i \le r} v_i $.  

\begin{obs}\label{obs: sign_dependant_bounds}
For time series $x \in \RR^m$, $b \in \RR^l$ and $w \in \RR$ it holds that
  \begin{itemize}
      \item $d_F(x,w b) \leq \max\{|\max(x) - \min(wb)|, |\min(x) - \max(wb)|\}$, and 
      \item $d_F(x,wb)\geq \max\{|\max(x) - \max(wb)|,|\min(x) - \min(wb)|\}$. 
  \end{itemize}

\end{obs}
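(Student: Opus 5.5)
Both bounds follow directly from the definition $d_F(x,wb)=\min \Vert M_m x - M_l (wb)\Vert_\infty$ over traversal matrices of matching dimension. The argument only uses two facts: every traversal contains at least one pair matching any given entry of $x$ (and of $wb$), and every matched pair consists of one value of $x$ and one value of $wb$. Write $y=wb$; it is just a time series in $\RR^l$, so the sign of $w$ plays no role.

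\emph{Upper bound.} Fix any pair of traversal matrices of matching dimension, for instance the one that advances on $x$ first and then on $y$. Every coordinate of $M_m x - M_l y$ has the form $x_i - y_j$ with $\min(x)\le x_i\le \max(x)$ and $\min(y)\le y_j\le\max(y)$. Hence
\[
\min(x)-\max(y)\;\le\; x_i-y_j \;\le\; \max(x)-\min(y),
\]
so $|x_i-y_j|\le \max\{|\max(x)-\min(y)|,|\min(x)-\max(y)|\}$. Taking the maximum over coordinates and then the minimum over traversals gives the first item.

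\emph{Lower bound.} Let $i^*$ be an index with $x_{i^*}=\max(x)$. In any traversal, i.e.\ for any pair $(M_m,M_l)$, some row $r$ of $M_m$ maps to $i^*$, because traversal matrices are surjective onto the indices: consecutive rows advance by at most one and the first and last indices are attained. That row matches $x_{i^*}$ with some $y_j\le\max(y)$, so $(M_m x - M_l y)_r \ge \max(x)-\max(y)$.

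Symmetrically, let $j^*$ be an index with $y_{j^*}=\max(y)$. Some row matches $y_{j^*}$ with some $x_i\le\max(x)$, so a coordinate is at least $\max(y)-\max(x)$ in value, which gives $-(M_mx-M_ly)_r\ge \max(y)-\max(x)$. Together these give $\Vert M_mx-M_ly\Vert_\infty\ge|\max(x)-\max(y)|$. The same argument with minima in place of maxima, where the inequalities flip, gives $|\min(x)-\min(y)|$. Minimising over traversals yields the second item.

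There is no real obstacle. The only point that needs care is the surjectivity of traversal matrices onto all indices, and that follows immediately from the monotonicity and endpoint conditions in the definition.
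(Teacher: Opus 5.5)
Your proof is correct. The paper states this as an observation without a written proof, justifying it only as "identifying the value pairs that have the smallest and largest differences". Your argument makes that rigorous: every matched difference lies in $[\min(x)-\max(wb),\,\max(x)-\min(wb)]$, and extreme values must be matched because traversal matrices are surjective onto the indices.
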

Note that since a time series $v$ gets flipped by multiplying it with a negative weight $w\in \RR_{<0}$, it holds $w\max(v) = \min(wv)$ and $w\min(v) = \max(wv)$.

To show that Algorithm~\ref{alg: constantApproxK1} is correct,  we first consider the problem of finding the weight that minimizes $d_F(x,w\cdot b)$ for fixed  $x \in \RR^m $ and $b \in \RR^l$.
Note that if $\Vert b \Vert_\infty = 0$, any weight minimizes $d_F(x,w\cdot b)$ regardless of $x$. On the other hand, if $\Vert x \Vert_\infty =0$  then weight $0$ minimizes $d_F(x,w\cdot b)$, since $x$ has to be all zeroes. We will therefore only consider time series $x$ and $b$ with $\Vert x\Vert_\infty, \Vert b \Vert_\infty \neq 0$.

Now we show that any optimal weight is contained in an interval with its boundaries depending on $\Vert x \Vert_\infty$ and $\Vert b \Vert_\infty$. 
The proof can be found in \Cref{app:constantfactor}.

\begin{restatable}{lemma}{lemmaboundweights}\label{lem:bound_weights}
For time series $x \in \RR^m$ with $\Vert x \Vert_\infty = \max(x) $ and $b \in \RR^l$ with $\Vert b \Vert_\infty = \max(b) $ it holds that 
\[\argmin_{w \in \RR} d_F(x,w\cdot b) \subseteq \left[\frac{-2\Vert x \Vert_\infty}{\Vert b \Vert_\infty},\frac{2\Vert x \Vert_\infty}{\Vert b \Vert_\infty}\right].\] 
\end{restatable}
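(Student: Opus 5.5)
The idea is to compare any weight outside the interval with the weight $w=0$. By Observation~\ref{obs: sign_dependant_bounds}, with $w=0$ the curve $wb$ is identically zero, so
\[
d_F(x,0\cdot b)\le \max\{|\max(x)|,|\min(x)|\}=\Vert x\Vert_\infty .
\]
It therefore suffices to show that every $w$ with $|w|>2\Vert x\Vert_\infty/\Vert b\Vert_\infty$ satisfies $d_F(x,w b)>\Vert x\Vert_\infty$. Such a $w$ is then strictly worse than $w=0$ and cannot lie in the argmin. Throughout, write $X=\Vert x\Vert_\infty=\max(x)>0$ and $\beta=\Vert b\Vert_\infty=\max(b)>0$, the cases with zero norm having been excluded before the statement.

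\emph{Case $w>2X/\beta$.} Here $\max(wb)=w\max(b)=w\beta>2X$. The lower bound of Observation~\ref{obs: sign_dependant_bounds} gives
\[
d_F(x,wb)\ge |\max(x)-\max(wb)| = w\beta - X > X .
\]

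\emph{Case $w<-2X/\beta$.} Multiplying by a negative weight flips the curve, so $\min(wb)=w\max(b)=w\beta<-2X$. Since $\min(x)\ge -X$, the lower bound gives
\[
d_F(x,wb)\ge |\min(x)-\min(wb)| \ge \min(x)-w\beta > -X+2X = X .
\]

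In both cases the strict inequality exceeds the value attained at $w=0$. Hence every minimizer satisfies $|w|\le 2X/\beta$, which is the claimed interval.

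The step I expect to need the most care is the sign bookkeeping in the negative case. One has to use the hypothesis $\Vert b\Vert_\infty=\max(b)$ to identify $\min(wb)=w\beta$, and $\min(x)\ge -\Vert x\Vert_\infty$ rather than equality, since $x$ need not attain $-X$. It is also worth confirming that the lower bound in Observation~\ref{obs: sign_dependant_bounds} holds because the first and last entries of any traversal's matched pairs force the maximum of one curve to be matched against some entry of the other; this gives $|\max(x)-\max(wb)|$ as a lower bound, which is all we use. A minimizer exists, since $d_F(x,wb)$ is continuous in $w$ and tends to infinity, but this is not even needed for the inclusion statement.
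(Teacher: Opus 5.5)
Your proof is correct and is essentially the paper's argument: compare against $w=0$, whose cost is $\Vert x\Vert_\infty$, then use the lower bounds of Observation~\ref{obs: sign_dependant_bounds}. For positive $w$ you bound via $|\max(x)-\max(wb)|$, and for negative $w$ via $|\min(x)-\min(wb)|$ with $\min(wb)=w\max(b)$, showing that any $|w|>2\Vert x\Vert_\infty/\Vert b\Vert_\infty$ has cost strictly above $\Vert x\Vert_\infty$. The only difference is cosmetic: the paper writes $w^*=\pm\bigl(2\Vert x\Vert_\infty/\Vert b\Vert_\infty+\varepsilon\bigr)$ and derives a contradiction, whereas you prove the strict inequality directly.
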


Next, we show that the weight $\frac{\Vert x \Vert_\infty}{\Vert b \Vert_\infty}$ gives a constant approximation for the best non-negative weight. 
The proof is also diverted to \Cref{app:constantfactor}. 
\begin{restatable}{lemma}{lemmaapproxpositiveweights}\label{approx_positive_weights}
    For time series $x \in \RR^m$ with $\Vert x \Vert_\infty = \max(x) $ and $b \in \RR^l$ with $\Vert b \Vert_\infty = \max(b) $ it holds that
 \[\df\left(x, \frac{\Vert x \Vert_\infty}{\Vert b \Vert_\infty} \cdot b\right) \leq 2\min_{w \in \RR_{\geq 0}}\df(x,w \cdot b).\]
\end{restatable}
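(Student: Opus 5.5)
The plan is to compare the fixed weight $w^\ast := \Vert x \Vert_\infty / \Vert b \Vert_\infty$ with an arbitrary non-negative weight $w \geq 0$ via the triangle inequality for the discrete Fréchet distance. We then absorb the resulting error term using the lower bound from Observation~\ref{obs: sign_dependant_bounds}. Throughout we write $X := \max(x) = \Vert x \Vert_\infty$ and $\beta := \max(b) = \Vert b \Vert_\infty$, which are the hypotheses of the lemma, and we may assume $\beta \neq 0$ as discussed before Lemma~\ref{lem:bound_weights}. It suffices to prove $\df(x, w^\ast b) \leq 2\,\df(x, w b)$ for every $w \geq 0$. Taking the infimum over $w \geq 0$ then gives the claim, and the minimum is attained because $w \mapsto \df(x,wb)$ is a minimum of finitely many continuous functions of $w$.

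The first step is to use that $\df$ is a pseudometric on time series, in particular that it satisfies the triangle inequality. This is standard and follows by composing traversals: given traversals realizing $\df(x,y)$ and $\df(y,z)$, one merges them along $y$ to obtain a traversal of $x$ and $z$. Hence
\[
\df(x, w^\ast b) \leq \df(x, w b) + \df(w b, w^\ast b).
\]
The second step bounds the last term using the identity traversal of $wb$ and $w^\ast b$, which have the same length $l$:
\[
\df(wb, w^\ast b) \leq \Vert wb - w^\ast b \Vert_\infty = |w - w^\ast|\,\Vert b \Vert_\infty = |w\beta - X|.
\]

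The third step, which is the only place where non-negativity of $w$ and the sign assumptions are used, is the following. Since $w \geq 0$, we have $\max(wb) = w\max(b) = w\beta$. The second item of Observation~\ref{obs: sign_dependant_bounds} then gives $\df(x, wb) \geq |\max(x) - \max(wb)| = |X - w\beta|$. Combining the three displays yields $\df(x, w^\ast b) \leq 2\,\df(x, wb)$, as desired.

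I expect no real obstacle. The only points needing care are justifying the triangle inequality for the discrete Fréchet distance in the traversal-matrix formalism, or citing it as standard, and checking that $\max(wb) = w\max(b)$ genuinely requires $w \geq 0$. The latter is exactly why the lemma is restricted to non-negative weights.
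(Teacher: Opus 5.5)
Your proposal is correct and follows essentially the same argument as the paper. The paper writes the optimal non-negative weight as $\frac{\Vert x \Vert_\infty}{\Vert b \Vert_\infty}+\alpha$, bounds $|\alpha|\,\Vert b \Vert_\infty$ by $\df(x,w^* b)$ via the lower bound $|\max(x)-\max(wb)|$ from Observation~\ref{obs: sign_dependant_bounds} (using $w^*\ge 0$), and then applies the triangle inequality exactly as you do; stating it for an arbitrary $w \geq 0$ rather than a minimizer is the same proof and also removes any need for the minimum to be attained.
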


Next, we investigate non-positive weights. Here, we first show that an additive error can occur when a positive weight is used instead of a corresponding negative weight.
\begin{restatable}{lemma}{lemmaadditiveerrorswitchingweightsign}\label{additive_error_switching_weight_sign}
    Given time series $x\in \RR^m$, $b\in \RR^l$, and  a non-positive weight $w \in \RR_{\leq 0}$, it holds that 
\[\df(x,|w|\cdot b) \leq \df(x,w \cdot b) + 2 \cdot |w| \cdot \Vert b \Vert_\infty.\] 
\end{restatable}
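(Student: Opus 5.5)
The plan is a triangle inequality argument for the discrete Fréchet distance, combined with a direct bound on $\df(w\cdot b, |w|\cdot b)$. The discrete Fréchet distance is a pseudometric on time series: composing traversals shows that $\df(x,z) \le \df(x,y) + \df(y,z)$. Applying this with $y = w\cdot b$ and $z = |w|\cdot b$ gives
\[
\df(x,|w|\cdot b) \le \df(x,w\cdot b) + \df(w\cdot b, |w|\cdot b).
\]
It then suffices to show that $\df(w\cdot b,|w|\cdot b) \le 2|w|\,\Vert b\Vert_\infty$.

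For this second term I would use the identity traversal, which matches $b_j$ with $b_j$ for every $j \in [l]$, i.e.\ $M_l = M_l'$ equal to the $l\times l$ identity matrix. Since $w \le 0$, we have $w = -|w|$, so the matched pairs are $-|w| b_j$ and $|w| b_j$. Their difference has absolute value $2|w|\,|b_j| \le 2|w|\,\Vert b\Vert_\infty$. Taking the maximum over $j$ gives the claimed bound.

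Alternatively, Observation~\ref{obs: sign_dependant_bounds} gives $\df(wb,|w|b) \le \max\{|\max(wb)-\min(|w|b)|,|\min(wb)-\max(|w|b)|\}$. Each of these terms is at most $2|w|\,\Vert b\Vert_\infty$, since every entry of $wb$ and of $|w|b$ lies in $[-|w|\Vert b\Vert_\infty, |w|\Vert b\Vert_\infty]$. Either route works.

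The only step needing care is the triangle inequality for the discrete Fréchet distance in the traversal-matrix formulation, so I would justify it explicitly. Take an optimal traversal of $x$ and $y$ and an optimal traversal of $y$ and $z$. Merge them along the common curve $y$ into a three-way traversal: walk through the indices of $y$ in order, and at each index of $y$ pair every $x$-index matched to it with every $z$-index matched to it, advancing monotonically. This gives index triples $(i,j,k)$ in which each consecutive step increments each coordinate by at most one. Projecting onto the $(i,k)$ coordinates yields a valid traversal of $x$ and $z$. For each matched pair we get $|x_i - z_k| \le |x_i - y_j| + |y_j - z_k| \le \df(x,y) + \df(y,z)$.

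In our application this is simpler still, because the second traversal is the identity on $b$. We can therefore keep the optimal traversal $(M_0, M_1)$ of $x$ and $w\cdot b$ and reuse $M_1$ for $|w|\cdot b$. Then
\[
\Vert M_0 x - |w| M_1 b\Vert_\infty \le \Vert M_0 x - w M_1 b\Vert_\infty + \Vert (w - |w|) M_1 b\Vert_\infty \le \df(x,wb) + 2|w|\,\Vert b\Vert_\infty,
\]
using $\Vert M_1 b\Vert_\infty \le \Vert b\Vert_\infty$ because the rows of $M_1$ are standard unit vectors. This avoids the general merging argument entirely, and it is the version I would write.
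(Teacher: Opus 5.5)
Your proof is correct, but it takes a different route from the paper. The paper never works with traversals here. It sandwiches both distances using the extremal-value bounds of Observation~\ref{obs: sign_dependant_bounds}: an upper bound $\df(x,|w| b)\le\max\{|\max(x)-|w|\min(b)|,\,|\min(x)-|w|\max(b)|\}$, and a lower bound $\df(x,wb)\ge\max\{|\max(x)+|w|\min(b)|,\,|\min(x)+|w|\max(b)|\}$, which uses $\max(wb)=w\min(b)$ and $\min(wb)=w\max(b)$ for $w\le 0$. It then compares the two maxima term by term via the reverse triangle inequality $|a|-|c|\le|a-c|$ on reals. This produces the additive terms $2|w|\,|\min(b)|$ and $2|w|\,|\max(b)|$.

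Your argument instead fixes an optimal traversal $(M_0,M_1)$ for $x$ and $wb$ and reuses it for $|w|b$. Since $\Vert M_1 b\Vert_\infty\le\Vert b\Vert_\infty$, this actually proves something stronger: $w\mapsto\df(x,wb)$ is $\Vert b\Vert_\infty$-Lipschitz. The lemma is then the special case $|w-|w||=2|w|$. This route is shorter and needs neither the sign bookkeeping nor the observation. It also carries over unchanged to weighted combinations of several base curves with fixed traversals. It is in fact the same ``triangle inequality'' step the paper itself uses inside the proofs of Lemma~\ref{approx_positive_weights} and Lemma~\ref{lem: approx_weights}. What the paper's version buys is uniformity: it keeps this lemma inside the max/min calculus of Observation~\ref{obs: sign_dependant_bounds} that drives the rest of the section.

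One minor point: your sketch of the general merging argument is loose. Literally pairing every $x$-index matched to $y_j$ with every $z$-index matched to $y_j$ would not give a traversal; you need a monotone staircase. Since your final write-up avoids this argument entirely, it does not affect the proof.
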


We can now show that using either $-1$ or $1$ as weight is a constant approximation for any weight if $x$ and $b$ are scaled in a certain way through Lemmas \ref{lem: scale base time series} and \ref{lem: scale single input time series}.

\begin{restatable}{lemma}{lemmaapproxweights}\label{lem: approx_weights}
Given time series $x\in \RR^m$ and $b\in \RR^l$ 
such that $\Vert x \Vert_\infty  = \Vert b \Vert_\infty = \max(x) = \max(b)$, it holds that
\[\min_{w \in \{-1,1\}} \df(x, w\cdot b) \leq 10\min_{w \in \RR}\df(x,w \cdot b).\]
\end{restatable}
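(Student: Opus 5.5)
Let $s=\Vert x\Vert_\infty=\Vert b\Vert_\infty=\max(x)=\max(b)$ and let $w^*$ be an optimal weight for $\min_{w\in\RR}\df(x,w b)$, with $d^*=\df(x,w^* b)$. I will split into the cases $w^*\ge 0$ and $w^*<0$ and compare against the candidate weight $1$ in both cases. Since $\Vert x\Vert_\infty/\Vert b\Vert_\infty=1$, the weight $1$ is exactly the weight appearing in Lemma~\ref{approx_positive_weights}. Throughout, $\min_{w\in\{-1,1\}}\df(x,wb)\le \df(x,b)$, so it suffices to bound $\df(x,b)$.

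\textbf{Case $w^*\ge 0$.} Here $d^*=\min_{w\ge 0}\df(x,wb)$. Lemma~\ref{approx_positive_weights} then gives $\df(x,b)\le 2d^*$ directly.

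\textbf{Case $w^*<0$.} First apply Lemma~\ref{additive_error_switching_weight_sign} with $w=w^*$:
\[
\df(x,|w^*| b)\le d^* + 2|w^*|\, s .
\]
This is a nonnegative weight, so $\min_{w\ge0}\df(x,wb)\le d^*+2|w^*|s$. Lemma~\ref{approx_positive_weights} then yields
\[
\df(x,b)\le 2d^*+4|w^*|s .
\]
It remains to bound $|w^*|s$ by a constant multiple of $d^*$. Lemma~\ref{lem:bound_weights} only gives $|w^*|\le 2$, which is not enough, so I will use the lower bound from Observation~\ref{obs: sign_dependant_bounds}:
\[
d^*\ge |\max(x)-\max(w^* b)| .
\]
Since $w^*<0$, we have $\max(w^* b)=w^*\min(b)$. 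Because $\Vert b\Vert_\infty=\max(b)=s$, we know $\min(b)\ge -s$, hence $w^*\min(b)\le |w^*|s$. This alone does not help, and it is the step I expect to be the main obstacle.

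The fix is to also use the second lower bound, $d^*\ge|\min(x)-\min(w^*b)|$, where $\min(w^*b)=w^*\max(b)=-|w^*|s$. Combined with $\min(x)\ge -s$, this should give $d^*\ge \min(x)+|w^*|s$.

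To close the argument I will split on the size of $|w^*|$:
\begin{itemize}
\item If $|w^*|\le 1$, I bound $\df(x,b)$ more crudely. By the upper bound in Observation~\ref{obs: sign_dependant_bounds}, $\df(x,b)\le 2s$, since all values lie in $[-s,s]$. So it then suffices to show $d^*\gtrsim s/\text{const}$.
\item To get $d^*$ of order $s$: we have $d^*\ge \max(x)-\max(w^*b)= s - |w^*|\cdot\max(-b)$. Together with $d^*\ge |w^*|s+\min(x)$ this handles some ranges. More robustly, $d^*\ge s-|w^*| s$, from $\max(-b)\le s$, and adding $d^*\ge |w^*|s - s$ is useless. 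So I will instead average with the bound coming from $|w^*|s$.
\end{itemize}

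Concretely, the intended final combination is
\[
\df(x,b)\le \min\bigl(2s,\;2d^*+4|w^*|s\bigr).
\]
Together with $d^*\ge s(1-|w^*|)$, this gives a bound of the form $\df(x,b)\le 10 d^*$: if $|w^*|\le 1/2$ then $d^*\ge s/2$ and $2s\le 4d^*$. If $|w^*|>1/2$, I need $d^*\ge c|w^*|s$. I would try to get this from the matching of the entry of $x$ attaining $\max(x)=s$: under the negative weight it is matched to a value at most $|w^*|\max(-b)$. I expect the constants to be tuned here to reach $10$.
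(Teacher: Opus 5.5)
\textbf{Gap: weight $1$ alone cannot work when $w^*<0$.} In that case you compare only against weight $1$, and no bound $\df(x,b)\le C\,d^*$ holds there. Take $x=(-1,1)$ and $b=(1,-1)$. Both satisfy $\Vert x\Vert_\infty=\Vert b\Vert_\infty=\max(x)=\max(b)=1$, and $-b=x$, so $d^*=\min_{w\in\RR}\df(x,wb)=0$, attained at $w^*=-1$. But every traversal matches $x_1=-1$ with $b_1=1$, so $\df(x,b)=2$. Hence the inequality you want for $|w^*|>1/2$, namely $d^*\ge c\,|w^*|s$, is false here ($d^*=0$, $|w^*|s=1$), and your chain $\df(x,b)\le 2d^*+4|w^*|s$ cannot be closed by tuning constants. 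Your case $w^*\ge 0$ via Lemma~\ref{approx_positive_weights} is correct. So is the subcase $-1/2\le w^*<0$: there $d^*\ge s(1-|w^*|)\ge s/2$ and $\df(x,b)\le 2s\le 4d^*$.

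\textbf{Missing idea: use $-b$ when $w^*$ is negative and not small.} By the triangle inequality, $\df(x,-b)\le d^*+|w^*+1|\,s$, so it suffices to show $|w^*+1|\,s=O(d^*)$.
\begin{itemize}
\item If $w^*=-1-\alpha$ with $\alpha>0$, the bound you already derived, $d^*\ge|\min(x)+(1+\alpha)s|\ge\alpha s$ (using $\min(x)\ge -s$), gives $\df(x,-b)\le 2d^*$.
\item If $w^*=-1+\alpha$ with $\alpha\in[0,1)$ and $\min(b)\le -s/2$, then $d^*\ge|s-w^*\min(b)|=s+\min(b)+\alpha|\min(b)|\ge\alpha s/2$, so $\df(x,-b)\le 3d^*$.
\item If $w^*=-1+\alpha$ with $\alpha\in[0,1)$ and $\min(b)>-s/2$, then $w^*\min(b)<s/2$, so $d^*> s/2$ and weight $1$ works. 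Your crude bound $\df(x,b)\le 2s\le 4d^*$ suffices here. The paper instead uses Lemmas~\ref{approx_positive_weights} and~\ref{additive_error_switching_weight_sign} in this subcase, which is where its constant $10$ comes from.
\end{itemize}
This three-way split with $-1$ as the comparator in the first two subcases is the paper's argument.
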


By scaling the input time series to satisfy the condition of Lemma \ref{lem: approx_weights} we have to deal with weighted distances. The next lemma shows that  we can find a constant factor approximation among the simplifications of the scaled input time series with respect to a modified variant of the objective function.

\begin{restatable}{lemma}{lemmaapproxconstantweights}\label{lemma:approx_constant_weights}
     Given a set of time series $\{x_1,\dots, x_n\}\subset \RR^m$, parameters $a_1, \dots, a_n \in \RR \setminus\{0\}$, and $l \in \NN$, let $b' \in \argmin_{b\in \RR^l}\sum_{i =1}^n |a_i| \min_{w\in \{-1,1\}} \df \left( \frac{x_i}{a_i},w \cdot b \right)$ and $y_j$ be a minimum-error $l$-simplification of $\frac{x_j}{a_j}$, where $j \in \argmin_{i \in [n]} \min_{w\in \{-1,1\}}\df\left(\frac{x_i}{a_i},w \cdot b'\right)$. Then, \[\sum_{i=1}^n |a_i| \min_{w \in \{-1,1\}} \df\left(\frac{x_i}{a_i}, w \cdot y_j\right) \leq 3 \sum_{i=1}^n |a_i|  \min_{w \in \{-1,1\}} \df\left( \frac{x_i}{a_i}, w \cdot b'\right).\]
\end{restatable}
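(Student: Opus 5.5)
The plan is to prove the inequality termwise: for every $i\in[n]$ I will show
\[
\min_{w\in\{-1,1\}} \df\left(\tfrac{x_i}{a_i}, w\cdot y_j\right) \le 3 \min_{w\in\{-1,1\}} \df\left(\tfrac{x_i}{a_i}, w\cdot b'\right).
\]
Multiplying by $|a_i|$ and summing over $i$ then gives the claim. The proof only uses three facts: the discrete Fréchet distance satisfies the triangle inequality; $\df(w u, w v) = \df(u,v)$ for $w\in\{-1,1\}$, since multiplying both curves by $-1$ does not change any $\ell_\infty$ difference of matched entries; and $y_j$ is a minimum-error $l$-simplification.

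Fix $i$. Let $w_i\in\{-1,1\}$ attain $\min_{w}\df(x_i/a_i, w b')$, and let $w_j$ attain the analogous minimum for $x_j/a_j$.

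First, I bound the distance from $y_j$ to $b'$. Since $w_j b'\in\RR^l$ and $y_j$ is a minimum-error $l$-simplification of $x_j/a_j$, we have $\df(x_j/a_j, y_j)\le \df(x_j/a_j, w_j b')$. By the triangle inequality,
\[
\df(w_j b', y_j) \le \df(w_j b', x_j/a_j) + \df(x_j/a_j, y_j) \le 2\,\df(x_j/a_j, w_j b').
\]

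Second, I use the choice of $j$. The index $j$ minimizes the \emph{unweighted} quantity $\min_{w}\df(x_i/a_i, w b')$ over all $i$. Hence
\[
\df(x_j/a_j, w_j b') \le \df(x_i/a_i, w_i b').
\]
This is exactly why the statement defines $j$ through the unweighted distances: the weights $|a_i|$ play no role in the per-term comparison.

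Third, I pick the sign $w := w_i w_j\in\{-1,1\}$ for the candidate curve $y_j$. The triangle inequality gives
\[
\df(x_i/a_i, w_i w_j y_j) \le \df(x_i/a_i, w_i b') + \df(w_i b', w_i w_j y_j).
\]
By sign invariance, the last term equals $\df(b', w_j y_j) = \df(w_j b', y_j)$. Combining with the first two steps, this is at most
\[
2\,\df(x_j/a_j, w_j b') \le 2\,\df(x_i/a_i, w_i b').
\]
Therefore $\min_{w}\df(x_i/a_i, w y_j) \le 3\,\df(x_i/a_i, w_i b')$, which is the termwise bound.

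I do not expect a real obstacle. The only delicate points are bookkeeping ones: the signs must be combined correctly, which is why the candidate sign is the product $w_i w_j$, and the argmin defining $j$ must be the unweighted one. The existence of the minimizer $b'$ is assumed by the statement itself.
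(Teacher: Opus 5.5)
Your proposal is correct and follows essentially the same route as the paper: a termwise bound using the sign $w_iw_j$ (the paper writes $w_i/w_j$, which is the same thing), with the triangle inequality routed through $w_i b'$ and $x_j/a_j$. Like the paper, you then combine sign invariance, the optimality of the minimum-error $l$-simplification against $w_jb'\in\RR^l$, and the unweighted choice of $j$ to obtain the factor $3$.
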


We are now ready to analyze  \Cref{alg: constantApproxK1}.
We use the fact that Algorithm \ref{alg: constantApproxK1} normalizes all the input time series to the same $\ell_\infty$ norm, which allows us to prove the following theorem by combining  Lemma \ref{lem: approx_weights} and Lemma \ref{lemma:approx_constant_weights}.

\begin{restatable}{theorem}{thmsconstantapprxkequalone}\label{thm: constant approx k equals 1}
    Given  $n$ time series $P \subset \RR^m$ and $l \in \NN$ then Algorithm \ref{alg: constantApproxK1} computes in time $\Tilde{O}(n^2ml)$ a $30$-approximation for the Fr\'echet decomposition problem with $k = 1$.
\end{restatable}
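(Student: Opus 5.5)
We prove the approximation guarantee and the running time separately, starting with the guarantee. Fix an optimal base curve $b^*\in\RR^l$ for Problem~\ref{prob: decomposition} with $k=1$. For $k=1$ the span is $\spann(\{b\})=\{w\cdot \hat b\}$ over closures $\hat b$, so $d_F(x,\spann(\{b\}))=\min_{w\in\RR} d_F(x,w\cdot b)$. By Lemma~\ref{lem: scale base time series} we may rescale $b^*$ (possibly by $-1$) so that $\Vert b^*\Vert_\infty=\max(b^*)=1$; if $b^*=0$ the optimum is $\sum\Vert x\Vert_\infty$ and any output is at most this, as noted below. For each $x\in P$ with $x\neq 0$ we set $a_x=\Vert x\Vert_\infty$ if $\max(x)=\Vert x\Vert_\infty$ and $a_x=-\Vert x\Vert_\infty$ otherwise. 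Then $x/a_x$ satisfies $\Vert x/a_x\Vert_\infty=\max(x/a_x)=1$. Zero curves contribute $0$ to every cost and can be discarded.

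The chain of inequalities then runs as follows. By Lemma~\ref{lem: scale single input time series}, $\mathrm{OPT}=\sum_x |a_x|\min_w d_F(x/a_x,w b^*)$. Lemma~\ref{lem: approx_weights} applies to the pair $(x/a_x,b^*)$ and gives
\[\sum_x |a_x|\min_{w\in\{-1,1\}} d_F(x/a_x,w b^*)\le 10\,\mathrm{OPT}.\]
The left side is at least its minimum over $b\in\RR^l$, which Lemma~\ref{lemma:approx_constant_weights} attains at $b'$. That lemma provides an index $j$ whose simplification $y_j$ achieves modified cost at most $3$ times that of $b'$, hence at most $30\,\mathrm{OPT}$.

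We must also reconcile this with what the algorithm actually computes. The algorithm simplifies $x/\Vert x\Vert_\infty$ rather than $x/a_x$, and evaluates $\Vert z\Vert_\infty\min_{w\in\{\pm1\}}d_F(z/\Vert z\Vert_\infty,w y_x)$. Because $w$ ranges over $\{-1,1\}$ and $d_F(-u,-v)=d_F(u,v)$, these quantities equal the modified objective with the $a_x$. Moreover, a simplification of $-u$ is the negation of one of $u$, which is harmless under the $\pm$ choice. So the algorithm minimizes the modified cost over all candidates $y_x$, in particular it does at least as well as $y_j$. Finally, the true cost of the returned $y$ is at most its modified cost, since restricting $w$ to $\{\pm1\}$ only increases distances; Lemma~\ref{lem: scale single input time series} converts back. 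This yields the factor $30$. The main subtlety is this sign and normalization bookkeeping together with the degenerate cases; the rest is a direct composition of the earlier lemmas.

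For the running time, the $n$ simplifications cost $O(n\, l m\log m\log(m/l))$ in total. The double loop evaluates $2n^2$ discrete Fréchet distances between curves of length $m$ and $l$, each in $O(ml)$ time by the standard dynamic program, for $O(n^2 ml)$ overall. Together this is $\tilde O(n^2ml)$.
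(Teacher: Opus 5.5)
Your proof is correct and takes the same route as the paper. You chain Lemma \ref{lem: scale single input time series}, Lemma \ref{lemma:approx_constant_weights} (using that the algorithm minimizes the modified $\{-1,1\}$-weight objective over all candidates $y_x$, so it does at least as well as $y_j$) and Lemma \ref{lem: approx_weights} to get the factor $3\cdot 10=30$, and the running-time accounting is the same. Your explicit sign bookkeeping via $a_x=\pm\Vert x\Vert_\infty$ and $d_F(-u,-v)=d_F(u,v)$ rigorously justifies the ``without loss of generality, the norm is attained by a non-negative value'' assumption the paper invokes, and your handling of zero curves covers a case the paper leaves implicit.
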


\begin{proof}
Let $b^* \in \argmin_{b \in \RR^l} \sum_{x \in P} \min_{w \in \RR} \df(x,w \cdot b)$ be an optimal solution for the Fr\'echet decomposition problem with $k = 1$. By Lemma \ref{lem: scale base time series} we can assume that $\Vert b^* \Vert_\infty = \max(b^*) = 1$.  Let $y$ be the time series returned by the algorithm. We show that $y$ is a $30$-approximation as follows. 
By \Cref{lem: scale single input time series}, we obtain

\[
\sum_{x \in P}  \min_{w\in\RR} \df(x,w \cdot y) = \sum_{x \in P} \Vert x \Vert_\infty \min_{w\in\RR} \df(\frac{x}{\Vert x \Vert_\infty},w \cdot y)\leq \sum_{x \in P} \Vert x \Vert_\infty \min_{w \in \{-1,1\}} \df(\frac{x}{\Vert x \Vert_\infty},w \cdot y).
\]
Next, by  \Cref{lemma:approx_constant_weights},
\begin{align*}
\sum_{x \in P}  \min_{w\in\RR} \df(x,w \cdot y)
&\leq 3 \min_{b \in \RR^l} \sum_{x \in P}\Vert x \Vert_\infty \min_{ w \in \{-1,1\}} \df(\frac{x}{\Vert x \Vert_\infty},w \cdot b)\\
&\leq 3 \sum_{x \in P}\Vert x \Vert_\infty \min_{ w \in \{-1,1\}} \df(\frac{x}{\Vert x \Vert_\infty},w \cdot b^*),
\end{align*}
and, by \Cref{lem: approx_weights} and \Cref{lem: scale single input time series},

\[
\sum_{x \in P}  \min_{w\in\RR} \df(x,w \cdot  y)
\leq 30 \sum_{x \in P} \Vert x \Vert_\infty \min_{w \in \RR} \df(\frac{x}{\Vert x \Vert_\infty},w \cdot  b^*)= 30 \sum_{x \in P}  \min_{w \in \RR} \df(x,w \cdot  b^*).
\]

The minimum-error $l$-simplifications of the time series can be computed in time 
$\tilde{O}(nl m)$   \cite{bereg2008simplifying}. Finding the best time series among the simplifications takes $O(n^2 m l)$ time.
\end{proof}

Our next step is to show that we can further improve the quality of the solution to an approximation ratio of $1+\varepsilon$. 
We show now that for an optimal base curve $b^{\ast}$, any time series inducing a small cost is either near  $b^{\ast}$ or near $-b^{\ast}$.

\begin{restatable}{lemma}{lemoptimalisnear}\label{lem:optimalisnear}
Given a set of $n$ time series $P \subset \RR^m$, 
let $b^* \in \RR^l$ be an optimal base curve and $b' \in \RR^l$ be a time series with cost  $\Delta'$.
Then,
\[\min_{\substack{w\in \{-1,1\} } }\df( b',   w\cdot b^*) \leq \frac{40 \Delta'}{\sum_{x\in P} \|x\|_{\infty}}. \]
\end{restatable}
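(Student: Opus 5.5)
The plan is to compare $b'$ and $b^*$ through each normalized input curve and then average, weighting by $\|x\|_\infty$. Following the conventions established after Lemma~\ref{lem: scale single input time series}, I assume that $b^*$ and $b'$ are both normalized so that $\|b^*\|_\infty=\max(b^*)=1$ and $\|b'\|_\infty=\max(b')=1$. By Lemma~\ref{lem: scale base time series} this does not change their costs, and the statement should be read for these normalized representatives. For each $x\in P$ let $\hat x = x/\|x\|_\infty$, possibly multiplied by $-1$ so that $\|\hat x\|_\infty=\max(\hat x)=1$. Flipping $\hat x$ is harmless, because the inner minimization over the weight $w$ absorbs the sign.

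\textbf{Step 1 (per-curve bound).} Fix $x\in P$. Since $\hat x$, $b'$ and $b^*$ all satisfy the hypothesis of Lemma~\ref{lem: approx_weights}, there are signs $s',s^*\in\{-1,1\}$ with
\[
\df(\hat x, s' b') \le 10 \min_{w\in\RR}\df(\hat x, w b'), \qquad \df(\hat x, s^* b^*) \le 10 \min_{w\in\RR}\df(\hat x, w b^*).
\]
The discrete Fréchet distance is a pseudometric and satisfies $\df(-u,-v)=\df(u,v)$. Hence, by the triangle inequality,
\[
\df(b', s's^* b^*) = \df(s'b', s^* b^*) \le \df(s'b',\hat x)+\df(\hat x,s^*b^*).
\]
Since $s's^*\in\{-1,1\}$, the quantity $D:=\min_{w\in\{-1,1\}}\df(b',w b^*)$ therefore satisfies
\[
D\le 10\Big(\min_{w}\df(\hat x,wb')+\min_w\df(\hat x,wb^*)\Big).
\]

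\textbf{Step 2 (averaging).} Multiply the Step 1 inequality by $\|x\|_\infty$ and sum over $x\in P$. By Lemma~\ref{lem: scale single input time series}, $\|x\|_\infty\min_w \df(\hat x, wb)=d_F(x,\spann(\{b\}))$. The right-hand side therefore becomes $10(\Delta'+\Delta^*)$, where $\Delta^*$ is the optimal cost. Optimality of $b^*$ gives $\Delta^*\le\Delta'$, so
\[
D\sum_{x\in P}\|x\|_\infty \le 20\Delta'\le 40\Delta'.
\]
Dividing by $\sum_{x\in P}\|x\|_\infty$ gives the claim. Curves with $\|x\|_\infty=0$ contribute $0$ to both sides and can be dropped.

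\textbf{Main obstacle.} The delicate point is not the arithmetic but checking that Lemma~\ref{lem: approx_weights} is applicable. This needs three things:
\begin{itemize}
\item the normalization convention ($\infty$-norm equal to $1$ and attained by the maximum) must hold simultaneously for $\hat x$, $b'$ and $b^*$;
\item the sign flips of $\hat x$ must not interfere with the final sign $s's^*$;
\item the triangle inequality must be applied to sign-adjusted curves.
\end{itemize}
The slack between the constant $20$ obtained here and the stated $40$ leaves room if a slightly weaker variant of the sign handling is needed. One such variant would lose a factor $2$ via Lemma~\ref{additive_error_switching_weight_sign}.
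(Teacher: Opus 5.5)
Your proof is correct, and it takes a genuinely different and tighter route than the paper's. Your normalization of $b'$ and $b^*$ matches the paper's, so both read the statement the same way.

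Write $X=\sum_{x\in P}\|x\|_\infty$.

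\textbf{The paper's route.} The paper argues by contradiction. It places the distribution $W(P')$ on the normalized curves, with mass proportional to $\|x\|_\infty$. It then applies Markov's inequality twice, to claim that more than half of the mass lies on curves with $\min_{w\in\{-1,1\}}\df(x',wb')<20\Delta'/X$, and more than half on curves with $\min_{w\in\{-1,1\}}\df(x',wb^*)<20\Delta^*/X$. The triangle inequality at a common witness curve then finishes the argument. Each Markov threshold costs a factor $2$ on top of the factor $10$ from Lemma~\ref{lem: approx_weights}, which is where the constant $40$ comes from.

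\textbf{Your route.} You apply the same triangle inequality, with the same sign bookkeeping and the same use of Lemma~\ref{lem: approx_weights}, at every curve. You then take the $\|x\|_\infty$-weighted average, which is exactly the expectation the paper's distribution encodes. This needs no thresholds and gives the constant $20$ directly.

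\textbf{What your route avoids.} Markov's inequality only yields probability at least $\frac12$, not strictly more than $\frac12$, so the paper's two events need not intersect. Moreover, when $\Delta^*=0$ the event $\{\min_{w}\df(x',wb^*)<2\Delta^*/X\}$ is empty, so the second Markov claim fails outright. Your averaging argument has neither issue.

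\textbf{What the paper's phrasing buys.} Its only extra content is that a constant fraction of the weight consists of witness curves. That also follows from your averaged bound by Markov if it is ever needed.
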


Now, given a time series $b'\in \RR^{l}$ which is either near to $b^{\ast}$ or $-b^{\ast}$, we can discretize the area around its vertices by an appropriately chosen grid and enumerate all possible near time series; one of them is guaranteed to be sufficiently close to $b^{\ast}$ (or $-b^{\ast})$, therefore providing an approximate solution whose approximation factor is controlled by the side-length parameter of the grid.
To compute the cost of a time series we make use of the following claim, which is implied by \Cref{lem: weights and traversal matrices} in Section \ref{ssec:exactprojection}.
\begin{claim}\label{clm: computeCostkEq1} Given a set of time series $P \subset \RR^m$ and a time series $b \in \RR^l$, one can compute $cost(P,\{b\})$ in time $ O(m^{4.5}l^2)$.

\end{claim}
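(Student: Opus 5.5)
The claim follows by specializing the exact projection algorithm of \Cref{lem: weights and traversal matrices} to $k=1$ and summing over the input. Since
\[
\cost(P,\{b\}) = \sum_{x\in P} d_F(x,\spann(\{b\})) = \sum_{x\in P}\min_{w\in\RR} d_F(x,w\cdot b),
\]
it suffices to solve the Fréchet Projection Distance problem (Problem~\ref{prob: distance}) for each $x\in P$ with the single base curve $b$. The per-curve bound $O(m^{4.5}l^2)$ in the claim suggests the bound is stated per curve, or with the factor $|P|$ suppressed. I will prove a per-curve bound and add the factor $|P|$ when summing.

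For a fixed $x$ and a fixed weight $w$, the value $d_F(x,wb)$ is given by the standard $O(ml)$ dynamic program over the $m\times l$ free-space table; this is the fixed-weight case of \Cref{lem: unweighted projection error}. The program branches only on comparisons between quantities of the form $|x_i - w b_j|$. Each such comparison is decided by the sign of a linear function of $w$, possibly after resolving the absolute values, and each has a breakpoint at a single value of $w$. Hence there are $O((ml)^2)$ critical weights in total.

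These critical weights are the arrangement of hyperplanes in $\RR^1$ used in \Cref{lem: weights and traversal matrices}. Within each of the $O(m^2l^2)$ open intervals, and at each breakpoint, the dynamic program makes the same sequence of decisions. The optimal traversal in a cell is therefore a fixed set of matched pairs $(i,j)$. Minimizing $\max_{(i,j)}|x_i - w b_j|$ over $w$ in the closure of that cell is a linear program in the two variables $w$ and $\delta$, which can be solved exactly. Taking the minimum over all cells gives $d_F(x,\spann(\{b\}))$.

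The main obstacle is the running time: done naively, each of the $O(m^2l^2)$ cells costs an $O(ml)$ dynamic program plus an LP, giving roughly $O(m^3l^3)$. To reach $O(m^{4.5}l^2)$ I would simply instantiate the general bound of \Cref{lem: weights and traversal matrices} at $k=1$. That bound is $O(m^{2k+2.5})$, whose $m^{4.5}$ comes from counting the cells together with the LP cost, with $l$ now made explicit. I would trust this bound for the claim rather than optimize it further, checking only that the parameter $l$ enters quadratically through the number of hyperplanes. Summing over $x\in P$ then gives the total cost.
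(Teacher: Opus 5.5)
Your proposal is correct and follows the paper's route: the claim is Theorem~\ref{lem: weights and traversal matrices} instantiated at $k=1$ and applied to each $x\in P$. You are also right that the bound is per curve; the paper itself uses $O(nm^{4.5}l^2)$ when it invokes the claim in Theorem~\ref{thm:epsapprox}.

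Your ``naive'' $O(m^3l^3)$ count is not actually an obstacle. It is exactly the dynamic-programming term $m^{2k+1}l^{2k^2+k}$ of the paper's own analysis at $k=1$, and it already lies in $O(m^{4.5}l^2)$ whenever $l\le m^{1.5}$ (e.g.\ for $l\le m$). So your self-contained argument, with an exact two-variable LP per cell, suffices without deferring to the general bound.
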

This procedure is described in Algorithm \ref{alg: epsApproxK1} and its performance is stated in \Cref{thm:epsapprox}. 

\begin{algorithm}[H]
\caption{\label{alg: epsApproxK1}}
\begin{algorithmic}[1]
\Procedure{ \texttt{EpsilonApprox} }
{$ P \subset \RR^m$, $l \in \NN$, $\varepsilon \in \RR_{>0}$}
\State $c \gets \texttt{ConstantApprox}(P,l)$ \Comment{\Cref{alg: constantApproxK1}}

\State $\Delta' \gets \cost(P,\{c\})$
\State $X \gets \sum_{x\in P} \Vert x \Vert_\infty$
\State $\delta \gets \frac{\varepsilon}{60} \cdot \frac{\Delta'}{X}$
\State $S_i \gets [c_i - \frac{40 \Delta'}{X}, c_i + \frac{40 \Delta'}{X} ]$ for $i \in [l]$ 
\State $\mathcal{C} \gets (\bigcup_{i \in [l]} S_i \cap G_\delta)$, where $G_\delta = \{i \cdot \delta ~|~ i \in \ZZ\}$
\State Let $b' \in \argmin_{b \in \mathcal{C}^l} \cost(P,\{b\})$ be arbitrary
\State \Return $b'$
\EndProcedure
\end{algorithmic}
\end{algorithm}

\begin{restatable}{theorem}{thmepsapprox}\label{thm:epsapprox}
    Given $n$ time series $P\subset \RR^m$ and parameter $l \in \NN$, $\varepsilon \in \RR_{>0}$, let $b^*$ be an optimal solution for the Fréchet decomposition problem with $k = 1$. Then
    Algorithm \ref{alg: epsApproxK1} computes in time $O\left( \frac{l}{\varepsilon}\right)^l \cdot O(nm^{4.5} l^2)+\tilde{O}(n^2ml)$ a time series $b' \in \RR^l$ such that 
 
    \[
    \cost(P,\{b'\})\leq (1+\varepsilon)\cdot \cost(P,\{b^{\ast}\}).
    \]
\end{restatable}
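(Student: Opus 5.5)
We prove the guarantee by exhibiting a grid curve whose cost is close to the optimum, and we bound the running time separately. Let $c$ be the output of \Cref{alg: constantApproxK1}, $\Delta' = \cost(P,\{c\})$, $X=\sum_{x\in P}\Vert x\Vert_\infty$ and $\Delta^* = \cost(P,\{b^*\})$. By \Cref{thm: constant approx k equals 1} we have $\Delta'\le 30\Delta^*$.

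\textbf{Locating the optimum.} By \Cref{lem:optimalisnear}, applied with $b'=c$, there is a sign $s\in\{-1,1\}$ with $\df(c, s b^*)\le 40\Delta'/X$. By \Cref{lem: scale base time series}, $s b^*$ is also an optimal base curve, so we rename it $b^*$.

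\textbf{Passing to entrywise closeness.} The Fréchet bound alone does not place each $b^*_i$ in $S_i$, because the coupling may match $c_i$ with some $b^*_j$ for $j\neq i$. We therefore interpret $\mathcal{C}$ as the union of all grid points in all the windows; this is why the algorithm takes $\bigcup_i S_i$. Each $b^*_j$ is matched to some $c_i$ with $|c_i-b^*_j|\le 40\Delta'/X$, so $b^*_j\in S_i$ for that $i$. Rounding $b^*_j$ to the nearest grid point of $G_\delta$ inside $S_i$ gives a curve $\hat b\in\mathcal{C}^l$ with $\Vert \hat b - b^*\Vert_\infty\le\delta$. The endpoints of the window are the only edge case, and rounding inward handles it.

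\textbf{Perturbation bound.} Next we show that for every $x$,
\[
\min_w \df(x,w\hat b)\le \min_w\df(x,wb^*) + |w_x|\delta,
\]
where $w_x$ is an optimal weight for $x$ and $b^*$. This holds because applying the same traversal matrix to $w_x\hat b$ and $w_x b^*$ changes each entry by at most $|w_x|\delta$. By \Cref{lem:bound_weights} and the normalization $\Vert b^*\Vert_\infty=1$ (from \Cref{lem: scale base time series}), we may take $|w_x|\le 2\Vert x\Vert_\infty$; the sign conventions needed there follow from \Cref{lem: scale single input time series}. Summing over $x\in P$ gives
\[
\cost(P,\{\hat b\})\le \Delta^* + 2X\delta = \Delta^*+\tfrac{\varepsilon}{30}\Delta'\le (1+\varepsilon)\Delta^*.
\]
Since $b'$ minimizes the cost over $\mathcal{C}^l$, the same bound holds for $b'$.

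\textbf{Running time.} We have $|\mathcal{C}|\le l\cdot(80\Delta'/X)/\delta+l = O(l/\varepsilon)$, so $|\mathcal{C}^l| = O(l/\varepsilon)^l$. Each candidate's cost is computed by \Cref{clm: computeCostkEq1}, applied per input curve, in $O(nm^{4.5}l^2)$ time. Adding the $\tilde O(n^2ml)$ time of \Cref{thm: constant approx k equals 1} gives the stated bound.

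\textbf{Main obstacle.} The delicate step is the perturbation bound. It requires the weight bound of \Cref{lem:bound_weights}, which is stated under sign normalizations that must be justified with \Cref{lem: scale base time series,lem: scale single input time series}. It also requires handling the degenerate cases $\Vert x\Vert_\infty=0$ and $\Delta'=0$, where $\delta=0$; in the latter case the constant approximation is already optimal and is returned directly.
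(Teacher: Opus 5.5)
Your proposal is correct and follows the paper's proof step for step. Both locate $\pm b^*$ within $40\Delta'/X$ of $c$ via Lemma~\ref{lem:optimalisnear} and round it to a point of $\mathcal{C}^l$ within $\delta$ in $\ell_\infty$. Both then charge an additive $|w_x|\delta\le 2\Vert x\Vert_\infty\delta$ per input curve using Lemma~\ref{lem:bound_weights}, and take the running time from Claim~\ref{clm: computeCostkEq1} together with $|\mathcal{C}^l|=O(l/\varepsilon)^l$. Your write-up is somewhat more careful than the paper's: you explain why the union of the windows $S_i$ is needed, since a vertex of $b^*$ need not lie in its own window. You also treat the endpoint rounding and the degenerate case $\Delta'=0$, which Algorithm~\ref{alg: epsApproxK1} does not handle as written.
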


\begin{proof}
By Theorem \ref{thm: constant approx k equals 1} it holds that $\Delta' \in [\cost(P,\{b^*\}), 30 \cost(P,\{b^*\})]$ and ConstantApprox$(P,l)$ computes $c$  in time $\Tilde{O}(n^2ml)$. 

By \Cref{lem:optimalisnear},  there exists $w\in \{-1,1\}$, such that
    \[\df(  c,   w\cdot b^*) \leq \frac{40 \Delta'}{\sum_{x\in P} \|x\|_{\infty}}. \]
    Hence, all vertices of $b^{\ast}$ or $-b^{\ast}$ lie in intervals $S_i:=\left[c_i -\frac{40 \Delta'}{\sum_{x\in P} \|x\|_{\infty}} , c_i +\frac{40 \Delta'}{\sum_{x\in P} \|x\|_{\infty}} \right]$, 
  
    where $c_i$, $i\in [l]$ are the vertices of $c$. 
    Let $\mathcal{C}$ be defined as in the algorithm, which is the area of interest discretized by a grid with side length $\delta$. 

    The algorithm returns the time series $b'$ in $\mathcal{C}^l$ that minimizes the cost. 
    \Cref{lem:optimalisnear} and the construction of $\mathcal{C}$ imply that there exists a time series $b\in\mathcal{C}^l$ such that $\|b - {b}^{\ast}\|_{\infty}\leq \delta$. For $x \in P$, let $w_x^{\ast} \in \argmin_{w\in \RR} \df(x,w\cdot b^{\ast})$.
    The cost of $b'$ is then upper bounded as follows:
    \begin{align}
        \cost(P,\{b'\}) &\leq \cost(P,\{b\})\\
        &=\sum_{x\in P}\min_{w\in\RR}\df(x,w\cdot b)\nonumber\\
        &\leq \sum_{x\in P}\df(x,w_x^{\ast}\cdot  b) \nonumber\\
        &\leq \sum_{x\in P}\left(\df\left(x,w_x^{\ast}\cdot {b^{\ast}}\right)+\df\left(w_x^{\ast}\cdot b^{\ast},w_x^{\ast}\cdot b\right) \right)\label{eq:epsapprox1}\\
         &\leq \cost(P,\{b^{\ast}\}) + \sum_{x\in P}w_x^{\ast} \delta \nonumber\\
          &\leq \cost(P,\{b^{\ast}\}) + 2\delta\sum_{x\in P}\|x\|_{\infty}   \label{eq:epsapprox2} \\
          &\leq \cost(P,\{b^*\}) + \frac{\varepsilon}{30}\Delta'\\
          &\leq (1+\varepsilon)\cost(P,\{b^{\ast}\}),\nonumber
    \end{align}
    where in \eqref{eq:epsapprox1} we use the triangle inequality and in \eqref{eq:epsapprox2} we use \Cref{lem:bound_weights}.

    For each time series in $\mathcal{C}$, we compute the total cost in time $O(nm^{4.5} l^2)$ using \Cref{clm: computeCostkEq1}. The size of $\mathcal{C}$ is in $O\left( \frac{l}{\varepsilon}\right)^l$; hence, the overall running time is $O\left( \frac{l}{\varepsilon}\right)^l \cdot O(nm^{4.5} l^2) +\tilde{O}(n^2ml)$. 
\end{proof}

By first applying Theorem \ref{thm: dim_reduction} on the input time series we obtain the following Corollary.
\begin{corollary}
\label{cor:main1}
    Given $n$ time series $P\subset \RR^m$ and constants $l \in \NN$ and $\varepsilon \in (0,1)$. Let $b^*$ be an optimal solution for the Fréchet decomposition problem with $k = 1$. One can compute in time $\tilde O(n^2+nm)$ a time series $b' \in \RR^l$ such that 
    \[
    \cost(P,\{b'\})\leq (1+\varepsilon)\cdot \cost(P,\{b^{\ast}\}).
    \]
\end{corollary}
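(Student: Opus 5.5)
The approach is to replace every input time series by a constant-complexity surrogate using Theorem~\ref{thm: dim_reduction}, and then run Algorithm~\ref{alg: epsApproxK1} on the surrogates. All of $m$, $l$, $\varepsilon$ become constants there, except $n$.

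Concretely, fix $\varepsilon' = \varepsilon/4$. For each $x\in P$, compute $z_x\in\RR^{d_0}$ with $d_0=d_0(\varepsilon',l)$ in time $O(m\log^2 m)$, for a total of $\tilde O(nm)$. Let $P'=\{z_x\}$. Since every element of $\spann(\{b\})$ relevant to the cost is of the form $w\cdot b$ with $b\in\RR^l$, we have $d_F(x,\spann(\{b\}))=\min_w d_F(x,wb)$. Here $wb\in\RR^l$ is again a curve of complexity $l$: a closure of $wb$ matched against $x$ is just a traversal of $wb$, so by Observation~\ref{obs:frechetclosureinftydist} the distance to the span reduces to a Fréchet distance to the single curve $wb$. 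Theorem~\ref{thm: dim_reduction} applies to every $y=wb\in\RR^l$. Taking the minimum over $w$ and summing over $x$ gives, for every $b\in\RR^l$,
\[
(1-\varepsilon')\cost(P,\{b\})\le \cost(P',\{b\})\le (1+\varepsilon')\cost(P,\{b\}).
\]

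Next, run Algorithm~\ref{alg: epsApproxK1} on $P'$ with parameter $\varepsilon'$. By Theorem~\ref{thm:epsapprox} it returns $b'$ with $\cost(P',\{b'\})\le(1+\varepsilon')\cost(P',\{\hat b\})$, where $\hat b$ is optimal for $P'$. Then
\[
\cost(P,\{b'\})\le \frac{1}{1-\varepsilon'}\cost(P',\{b'\})\le \frac{1+\varepsilon'}{1-\varepsilon'}\cost(P',\{b^*\})\le\frac{(1+\varepsilon')^2}{1-\varepsilon'}\cost(P,\{b^*\}),
\]
which is at most $(1+\varepsilon)\cost(P,\{b^*\})$ for $\varepsilon'=\varepsilon/4$ and $\varepsilon<1$ (rescale $\varepsilon'$ by a constant if needed).

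For the running time, Theorem~\ref{thm:epsapprox} with complexity $d_0=O(1)$ and constant $l,\varepsilon'$ gives $O(n)+\tilde O(n^2)$. Adding the reduction cost gives $\tilde O(n^2+nm)$.

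The main point needing care is the first step: justifying that the guarantee of Theorem~\ref{thm: dim_reduction}, which is stated for curves $y\in\RR^l$, transfers to the distance to the Fréchet span. I expect this to be routine, since the span of a single curve consists exactly of scaled expansions of $b$, and $d_F$ is invariant under expanding $y$. It is also important that the same $z_x$ works simultaneously for all $y$, hence for all weights $w$ and all candidate $b$; this is exactly what the theorem provides. A final technical point is that Theorem~\ref{thm: dim_reduction} assumes $l$ is at most the curve complexity implicitly. If $d_0<l$, I would pad $z_x$ by repeating vertices, which does not change Fréchet distances.
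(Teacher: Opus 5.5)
Your proposal is correct and is the same argument as the paper's, whose proof is only the remark that Theorem~\ref{thm: dim_reduction} is applied to every input curve before running Algorithm~\ref{alg: epsApproxK1}; you also spell out what the paper leaves implicit, namely the reduction $d_F(x,\spann(\{b\}))=\min_w d_F(x,wb)$, the transfer of the guarantee to all $b$ at once, and the running-time bookkeeping. The one slip is arithmetic: $(1+\varepsilon/4)^2/(1-\varepsilon/4)\le 1+\varepsilon$ holds only for $\varepsilon\le 4/5$, so take $\varepsilon'=\varepsilon/5$ instead, which works for all $\varepsilon\in(0,1)$.
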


\section{Exact algorithms}
\label{sec:exact}

\subsection{Projection Distance Problem}
\label{ssec:exactprojection}
In this section, we give an algorithm for 
the projection distance problem. Missing proofs of this section are located in Appendix \ref{app:exact}. We start by solving the problem for the case that the weight vector is fixed. Note that in this case we can assume without loss of generality that the weight vector is the vector of all $1$s since we can scale each base curve by its optimal weight and use the resulting set of base curves instead.
We solve the problem using dynamic programming.   

Let $x \in \RR^m$ and $b_1,\dots, b_k \in \RR^l$.
For a vector $v=(v_1,\dots, v_r)$ we will use $v^{(i)} = (v_1,\dots, v_i)$ to be the $i$-dimensional vector that agrees with $v$ on the first $i$ coordinates.  Additionally for $b_j$ with $1\leq j \leq k$ we use the notation $(b_j)_i$ to address its $i$-th coordinate. 
For the dynamic program, we use $D \subset \RR^{m\times l^k}$ to denote a multidimensional array of which each entry $D[i_0,i_1,\dots, i_k]$ will store the target value
\[
\min_{M_0,\dots,M_k} \Vert  M_0 x^{(i_0)} - \sum^k_{j=1}  M_j b_j^{(i_j)}\Vert_\infty
\]
Intuitively, the target value is the Fréchet distance of the prefix of $x$ to the (unit weight) span of the prefixes of the base vectors (cf.\ Observation~\ref{obs:frechetclosureinftydist}). The subproblems solved by the dynamic program are parametrized by the lengths of these prefixes. To define the dynamic program computing these entries, we use the following recursion: 
\[
D[i_0,\dots, i_k] = \max \big\{\min_{(h_0,\dots,h_k) \in H_{i_0,\dots,i_k}} D[i_0 - h_0, \dots,i_k - h_k ], |x_{i_0}-\sum^k_{j=1}(b_j)_{i_j}| \big\} , 
\]
where $H_{i_0,\dots,i_k} = \big\{(h_0,\dots,h_k)\in \{0,1\}^{k+1}~:~ i_j -h_j \geq 1 \text{ for } 0 \leq j \leq k \big\} \setminus \big\{(0,\dots,0)\big\}$ is the set of possible movements among the involved time series 
and we set $D[1, \ldots , 1] = |x_1 - \sum^k_{j=1} (b_j)_1 |$. We may abbreviate notation and write  $D[\vec{i}]$ and $H_{\vec{i}}$ for $\vec{i} = ( i_0, \dots,i_k)$. 
The dynamic program is given in Algorithm \ref{alg: 
projection}. The proof builds upon the reasoning behind the standard DP formulation of the discrete Fréchet distance computation in \cite{eiter1994computing}.

\begin{algorithm}
\caption{\label{alg: projection}}
\begin{algorithmic}[1]
\Procedure{ \texttt{Projection} }
{$ x \in \RR^m, \{b_1, \dots , b_k \} \subset \RR^l$}
\State Initialize $D \subset \RR^{m \times l^k}$  \label{line: init table}
\For{$\vec{i} \in [m] \times [l]^k$ in lexicographical order }\label{line: loop positions}
\If{$\vec{i} = (1,\dots, 1)$}
\State $D[\vec{i}] \gets |x_1 - \sum^k_{j=1} (b_j)_1|$
\Else
\State $D[\vec{i}] \gets \max \big\{\min_{h \in H_{\vec{i}}} D[\vec{i}-h], |x_{i_0} - \sum^k_{j=1} (b_j)_{i_j}| \big\}$ \label{line: dp_decision}
\EndIf
\EndFor
\State \Return $D$
\EndProcedure
\end{algorithmic}
\end{algorithm}

Given the computed array $D$ of Algorithm \ref{alg: projection} one can reconstruct optimal traversal matrices with a standard dynamic programming technique by starting from the top-right corner of $D$ and tracing back to the origin, which decisions have led to the solution. Here we only compute the optimal traversal matrices implicitly by constructing their traversal identifiers to save space and time, since the explicit matrix construction is not necessary for our purposes.
The running time of this procedure is dominated by the computation of the table $D$ yielding the following theorem. The full proof of correctness is in Appendix \ref{app:exact}.

\begin{restatable}{theorem}{lemmatraversalscomputation}\label{lem: compute traversal matrices}
Given time series $x \in \RR^m$, $b_1,\ldots, b_k\in \RR^l$, one can  compute in $O((2l)^{k}    m )$ time,  traversal identifiers of traversal matrices $M_0,\dots,M_k$ that minimize $\Vert  M_0 x - \sum^k_{j=1}  M_jb_j\Vert_\infty$.
\end{restatable}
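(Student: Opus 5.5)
The plan is to prove, by induction over the lexicographic order used in Algorithm~\ref{alg: projection}, that every entry $D[\vec{i}]$ equals its target value
\[
T[\vec{i}] := \min_{M_0,\dots,M_k} \Big\Vert M_0 x^{(i_0)} - \sum_{j=1}^k M_j b_j^{(i_j)}\Big\Vert_\infty ,
\]
where the minimum ranges over tuples of traversal matrices with a common row count $t$. I would first recast such a tuple combinatorially. As in the discussion after the definition of traversal matrices, a tuple $(M_0,\dots,M_k)$ corresponds to a sequence of index vectors in $[i_0]\times[i_1]\times\dots\times[i_k]$. This sequence starts at $(1,\dots,1)$, ends at $\vec{i}$, and each step adds some $h\in\{0,1\}^{k+1}$. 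Repeated consecutive vectors can be deleted without changing the $\ell_\infty$ norm, since they only duplicate rows, so we may assume $h\neq 0$. The objective is then the maximum over visited vectors $\vec{p}$ of $|x_{p_0}-\sum_j (b_j)_{p_j}|$. This is exactly the $(k+1)$-dimensional analogue of the standard discrete Fréchet argument of \cite{eiter1994computing}.

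The correctness argument has four steps.
\begin{enumerate}
\item Base case: $\vec{i}=(1,\dots,1)$. Here the only path is the single point, so $T[\vec{i}]$ equals the initialized value.
\item Upper bound $D[\vec{i}]\ge T[\vec{i}]$ fails in the wrong direction, so instead show $T[\vec{i}]\le D[\vec{i}]$. Take $h\in H_{\vec{i}}$ attaining the minimum in line~\ref{line: dp_decision}, take an optimal path to $\vec{i}-h$ (available by induction, since $\vec{i}-h$ precedes $\vec{i}$ lexicographically), and append $\vec{i}$. The resulting path has cost equal to the max in the recursion.
\item Lower bound $T[\vec{i}]\ge D[\vec{i}]$. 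Any optimal path to $\vec{i}\neq(1,\dots,1)$ has a predecessor $\vec{i}-h$ with $h\in H_{\vec{i}}$. The cost of its prefix is at least $T[\vec{i}-h]=D[\vec{i}-h]$, and its last term is $|x_{i_0}-\sum_j (b_j)_{i_j}|$.
\item Conclusion: at the corner $(m,l,\dots,l)$ the table value $D$ equals the optimum of $\Vert M_0x-\sum_j M_jb_j\Vert_\infty$.
\end{enumerate}

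Recovering the traversal identifiers is done by backtracking from $(m,l,\dots,l)$. At each step we move to an $h$ attaining the minimum, which is recomputable in $O(2^{k+1})$ time, and record the visited index vectors. The $j$-th coordinates of this reversed sequence are the traversal identifier of $M_j$. Its length is at most $m+kl$, since each step strictly decreases the sum of the coordinates.

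For the running time, the table has $m l^k$ entries. Each entry takes a minimum over $|H_{\vec{i}}|\le 2^{k+1}$ predecessors plus an $O(k)$ sum, which gives $O(2^{k}l^k m)=O((2l)^k m)$ for constant $k$; backtracking costs less.

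The main obstacle is the faithful correspondence between traversal-matrix tuples of matching dimension and monotone lattice paths. This matters in particular because the definition of traversal matrices also permits steps that stay on some coordinates, and zero steps produce duplicate rows. I would handle this by carefully invoking the identification via traversal identifiers already described in the preliminaries, and by noting that deleting duplicate rows preserves the $\infty$-norm.
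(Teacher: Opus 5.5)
Your proposal is correct and follows essentially the same route as the paper: an induction over the lexicographic order showing that each table entry equals the prefix projection distance, followed by standard backtracking for the traversal identifiers and the $O(2^k m l^k)$ table-filling bound. The paper phrases the two inequalities via its shrink/extend observations on traversal matrices, which are exactly your remove-last/append-a-vertex moves on lattice paths, and your explicit removal of zero steps is slightly more careful than the paper's treatment; note, though, that the opening sentence of your step 2 is garbled ($D[\vec{i}]\ge T[\vec{i}]$ and $T[\vec{i}]\le D[\vec{i}]$ are the same inequality), even though the argument that follows it is right.
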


We now solve the projection distance problem when the weights are not fixed.

\begin{theorem}{}\label{lem: weights and traversal matrices}
Given a time series $x \in \RR^m$ and $k$ time series $B = \{b_1,\dots, b_k\} \subset \RR^l$, one can compute $d_F(x,\spann(B))$ 
in $O(2^{k}  m^{(2k+2.5)} l^{2k^3})$ time. 

\end{theorem}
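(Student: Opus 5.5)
Following the outline in the introduction, I would partition weight space $\RR^k$ into regions on which Algorithm~\ref{alg: projection}, run on the scaled curves $w_1b_1,\dots,w_kb_k$, makes the same decisions. Then I would solve one linear program per region.

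First I need that the distance with weights fixed equals $D[m,l,\dots,l]$ for the curves $w_jb_j$. This follows from Lemma~\ref{lem: compute traversal matrices} and the observation preceding the DP that fixed weights can be absorbed into the base curves. For fixed $w$, every DP entry $D[\vec i]$ is a max/min combination of the terms $f_{\vec i}(w)=|x_{i_0}-\sum_j w_j (b_j)_{i_j}|$. There are at most $ml^k$ such terms, and each is the absolute value of an affine function of $w$. So the decisions in Line~\ref{line: dp_decision} depend only on the relative order of these values. Comparing two values $|\alpha(w)|$ and $|\beta(w)|$ is determined by the signs of the affine functions $\alpha-\beta$ and $\alpha+\beta$, together with the signs of $\alpha$ and $\beta$. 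I therefore define $H$ as the set of hyperplanes $\{\alpha_{\vec i}=0\}$ and $\{\alpha_{\vec i}\pm\alpha_{\vec i'}=0\}$ over all index pairs, so $|H|=O(m^2l^{2k})$. Inside an open cell of $\AAA(H)$ all these terms are totally ordered in a fixed way. Hence every $\max$ and every $\min$ in the DP picks the same argument throughout the cell, and the DP yields the same traversal identifiers, i.e.\ the same matrices $M_0,\dots,M_k$.

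On the closure $\mathrm{cl}(F)$ of a cell $F$, the objective is
\[\max_{r}\Bigl|(M_0x)_r-\sum_j w_j (M_jb_j)_r\Bigr|,\]
where $M_0,\dots,M_k$ are fixed and the DP value equals this expression. I would check this by induction, using the standard reasoning of \cite{eiter1994computing}. By continuity of the objective, the identity also holds on $\mathrm{cl}(F)$. Minimizing over $w\in \mathrm{cl}(F)$ is then an LP in the variables $(w,z)$: minimize $z$ subject to $\pm((M_0x)_r-\sum_j w_j(M_jb_j)_r)\le z$ and the halfspace constraints defining $\mathrm{cl}(F)$. The answer is the minimum over all cells. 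This is correct because every $w\in\RR^k$ lies in the closure of some full-dimensional cell, since lower-dimensional faces are limits of cells, and on each cell's closure the LP value equals $\min_{w\in \mathrm{cl}(F)}d_F(x,\spann(\{w_jb_j\}))$. Hence the minimum over cells is exactly $d_F(x,\spann(B))$.

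For the running time, the arrangement has $O(|H|^k)=O(m^{2k}l^{2k^2})$ cells, and it can be enumerated in that time by standard methods, which also give a sample point per cell. For each cell I run Algorithm~\ref{alg: projection} at the sample point in $O((2l)^k m)$ time to get the traversals. I then solve an LP with $k+1$ variables and $O(m l^{k})$ constraints from the traversal plus $O(|H|)$ cell constraints; the cell constraints can perhaps be replaced by only the defining facets. The exponents $2.5$ and $l^{2k^3}$ in the stated bound come from the LP solver's cost. This bookkeeping is the main obstacle. The delicate step is justifying that identical DP decisions hold on the whole cell, including ties on the boundary. I would handle it by working only with open cells, arguing via continuity on closures, and showing that the union of the closures covers $\RR^k$.
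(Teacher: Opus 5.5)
Your correctness argument follows the paper's. You use the same hyperplanes $\alpha_{\vec i}=\pm\alpha_{\vec j}$; the paper gets $\alpha_{\vec i}=0$ implicitly from $\vec i=\vec j$, $\sigma=-1$. You also use the same observation that all of the DP's comparisons are frozen on a cell, one LP per cell, and the minimum over cells. Two details are handled better in your version. Writing both constraints $\pm(\cdot)\le z$ removes the need for the paper's sign function $\sigma_F$. Your continuity argument is what actually justifies optimality of the cell's traversals on all of $\mathrm{cl}(F)$; the paper asserts the DP's evaluations are fixed on the closure, which need not hold at boundary ties. Your remark that the closures of the cells cover $\RR^k$ is likewise left implicit in the paper.

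What is missing is the running-time bound, which you explicitly leave open, and your guess about where its exponents come from is wrong. The paper counts $O(m^{2k}l^{2k^2})$ cells, each costing $O(2^k m l^k)$ for the DP plus $O((km)^{2.5})$ for the LP via Vaidya's algorithm. The $m^{2.5}$ does come from the LP. But $l^{2k^3}$ is a crude absorption of the factor $l^{2k^2+k}$ from cells times DP, valid for $k\ge 2$ and for $k=1$ using $l\le m$; it does not come from the LP. Charging an interior-point solver such as Vaidya's for your LP with $O(ml^k)+O(|H|)$ constraints would overshoot the $m^{2k+2.5}$ term.

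To finish the analysis:
\begin{itemize}
\item A traversal is a monotone path in $[m]\times[l]^k$, so it contributes only $O(m+kl)$ distinct constraints.
\item The $\mathrm{cl}(F)$ constraints must be dealt with explicitly. One option is a fixed-dimension LP algorithm whose time is linear in the number of constraints (e.g.\ Megiddo's). This gives $O(m^{2k+2}l^{2k^2+2k})$ overall for constant $k$, which fits the bound for $k\ge 2$; for $k=1$ every cell is an interval with two endpoints.
\item The other option is to keep only the facets of each cell; their total number over all cells is $O(|H|^k)$.
\end{itemize}
The paper itself silently ignores the $\mathrm{cl}(F)$ constraints when it charges $O((km)^{2.5})$ per LP, so your sense that this step needs care is justified.
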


\begin{proof}
    Our algorithm initially constructs an arrangement of hyperplanes, such that each cell of the arrangement defines a subset of $\RR^k$ corresponding to weight values that all lead to the same sequence of decisions taken by \Cref{alg: projection}. Having such an arrangement then allows us to treat our problem in two steps: we compute the best traversal matrices within each cell, and for each set of such traversal matrices we compute the best possible weights.  
    Let $M_{0}^*,\ldots, M_{k}^*$ be optimal traversal matrices and $w^*_1,\dots,w^*_k \in \RR$ be optimal weights:
\[ d_F(x,\spann(B)) = \Vert M^*_0 x - \sum^k_{j=1} w^*_j M^*_j b_j\Vert_\infty.\]

We start by defining hyperplanes that encode relations of distances between any vertex of the input time series and any candidate (matched) vertex of its projection. 
Given $\vec{i} = (i_0,\dots,i_{k})\in [m]\times [l]^k, \vec{j}=(j_0,\dots,j_{k})\in [m]\times [l]^k$, for each $\sigma\in\{-1,+1\}$ we define   
\[
s_{\sigma}(\vec{i},\vec{j}):=\left\{(w_1,\ldots,w_k)\in \RR^{k} :
 x_{i_0}- \sum_{y=1}^{k}w_{y }(b_y)_{i_
y}   =  \sigma\cdot( x_{j_0}- \sum_{y=1}^{k}w_{y }(b_y)_{j_
y}  )
\right\}
\]
and
$s(\vec{i},\vec{j}):=\left\{s_{\sigma}(\vec{i},\vec{j})\mid \sigma\in \{-1,+1\} \right\}$. 
Essentially, the arrangement of hyperplanes 
$\AAA(s(\vec{i},\vec{j}))$
partitions $\RR^k$ into cells, such that any weight vector within the same cell induces the same sign on $x_{i_0}- \sum_{y=1}^{k}w_{y }(b_y)_{i_
y}  -  \sigma\cdot( x_{j_0}- \sum_{y=1}^{k}w_{y }(b_y)_{j_
y})$, for any $\sigma \in \{-1,+1\}$, therefore inducing the same sign on  $|x_{i_0}- \sum_{y=1}^{k}w_{y }(b_y)_{i_
y}|  -  | x_{j_0}- \sum_{y=1}^{k}w_{y }(b_y)_{j_
y}|$. 

Next, we define the set of hyperplanes 
 \[\mathcal{S}:= \bigcup_{\vec{i},\vec{j}\in [m]\times [l]^k} s(\vec{i},\vec{j}).\] 
For each cell $F$ of the arrangement $\AAA(\mathcal{S})$ there exists a common ordering of the values $|x_{i_0}- \sum_{y=1}^{k}w_{y }(b_y)_{i_
y}|$ over all $i \in [m] \times [l]^k$. This implies that each cell $F$ of the arrangement $\AAA(\mathcal{S})$ is a subset of $\RR^k$, for which, for any $(w_0,\dots,w_k) \in \mathrm{cl}(F)$, the evaluations made by \Cref{alg: projection} in Line \ref{line: dp_decision}, throughout all iterations, are fixed. Hence, by the correctness of \Cref{alg: projection}, 
there are traversal matrices $M_0, \dots,M_{k}$ such that 
for any $(w_0,\dots,w_k) \in \mathrm{cl}(F)$,
$M_0, \dots,M_{k}$ minimize $ \Vert M_0 x - \sum_{i=1}^k w_i M_{i} b_i\Vert_{\infty}$.
Additionally, by the definition of $\mathcal{S}$  there is a cell $F^*$ of $\AAA(\mathcal{S})$ such that~$M_{0}^*,\dots, M_{k}^*$ minimizes $\Vert M_0 x  - \sum_{i=1}^k w_i M_{i}  b_i \Vert_{\infty}$ for any $(w_0,\dots,w_k) \in \mathrm{cl}(F^*)$.
We then compute the minimizing weights of $\mathrm{cl}(F)$ as follows. Let $c_0,\ldots,c_k$  be the traversal identifiers of \Cref{lem: compute traversal matrices}, let $t$ be their common dimension, and let $M_0,\dots,M_k$ be the corresponding optimal traversal matrices.
Finally let
$I_F:= \{((c_0)_r,\ldots,(c_k)_r) \mid \text{ for } r \in [t] \}$. We have
\[
\forall w\in \mathrm{cl}(F):~  \Vert M_0 x - \sum_{i=1}^k w_i M_{i}    b_i \Vert_\infty = \max_{(i_0,\ldots,i_{k})\in I_F} | x_{i_0}-\sum_{j=1}^{k}  w_j  (b_j) _{i_j}  |.
\]
Moreover, by the definition of $\mathcal{S}$, for each cell $F$ of $\AAA(\mathcal{S})$, there is a uniquely defined function $\sigma_F:~[m]\times[l]^k \mapsto \{-1,1\}$ such that
\[
\forall w\in \mathrm{cl}(F):~\Vert M_0 x - \sum_{i=1}^k M_{i} w_j b_i \Vert_{\infty}:=\max_{(i_0,\ldots,i_{k})\in I_F} \sigma_F(i_0,\ldots,i_{k})\cdot (x_{i_0}-\sum_{j=1}^{k} w_j  (b_j)_{i_j}  ) .
\]
Formally, we need to solve the following optimization problem:  \[\min_{w \in \mathrm{cl}(F)} \max_{(i_0,\ldots,i_{k})\in I_F} \sigma_F(i_0,\ldots,i_{k})\cdot (x_{i_0}-\sum_{j=1}^{k}  w_j (b_j)_{i_j}  ).\] 
This can be written as a linear program as follows: 
\begin{align*}
    \text{objective:} &~\min_{r\in \RR} r \\
\text{constraints:} & ~\forall (i_0,\ldots,i_{k})\in I_F:~     \sigma_F(i_0,\ldots,i_{k}) \cdot (x_{i_0}-\sum_{j=1}^{k} w_j (b_j)_{i_j} )\leq r\\
&~(w_1,\ldots,w_k)\in \mathrm{cl}(F).
\end{align*}
Doing so for each cell $F$ of $\AAA(\mathcal{S})$, we return the best weights $w$ and $M_0, \dots,M_{k}$ encountered this way, implying the correctness. Note that the linear program is bounded, since by the definition of $\sigma_F$, the left-hand side terms of the constraint inequalities are non-negative; hence $r$ is non-negative.

Next, we analyze the running time of this procedure. The arrangement $\AAA(\mathcal{S})$ contains $O(  m^2l^{2k})$ hyperplanes. Therefore, the number of cells is in $O(  m^{2k}  l^{{2k}^2})$ and $\mathcal{S}$ can be constructed using the Algorithm of Edelsbrunner,  O’Rourke, and Seidel~\cite{EOS86} in time $O(  m^{2k}  l^{{2k}^2})$. For each cell we 
apply Corollary \ref{lem: compute traversal matrices}, which requires $O(2^k m  l^k)$ time.
The running time needed to find optimal weights of a cell is in $O((km)^{2.5})$ by \cite{vaidya1989speeding} since the LP contains $O(km)$ variables. 
In total, this is $O( m^{2k}  l^{2k^2}(2^km  l^k + (k m)^{2.5})) \subset O(2^{k}  m^{(2k+2.5)} l^{2k^3})$.
\end{proof}

\subsection{Fr\'echet Decomposition Problem}
\label{ssection:exactdecomposition}
The results of \cref{ssec:exactprojection} imply a solution for the Fr\'echet decomposition problem when the base curves are chosen from a predefined finite set of candidates $C$. By simply enumerating all $k$-combinations from $C$, and computing their costs using \Cref{lem: weights and traversal matrices}, we obtain the following result.  
\begin{corollary}\label{cor:projection} Let $k$ and $l\in \NN$ be constants.
Given a set of $n$ time series $P \subset \RR^m$, $k \in \NN$, a set of time series $C \subset \RR^l$ with $|C|\geq k$, we can compute a set of $k$ time series $B \subseteq C$ that minimizes
$cost(P,B)$ in time ${O}\left(|C|^k   n m^{2k+2.5}\right)$.  
\end{corollary}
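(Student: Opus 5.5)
The approach is brute-force enumeration over candidate subsets, with \Cref{lem: weights and traversal matrices} as the evaluation subroutine. Since $B$ is restricted to be a subset of $C$ of size $k$, the feasible set is finite. It consists of the $\binom{|C|}{k} \le |C|^k$ subsets $B=\{c_1,\dots,c_k\}\subseteq C$. For each such $B$ we evaluate
\[
\cost(P,B)=\sum_{x\in P} d_F(x,\spann(B))
\]
exactly and return a minimizing subset. Correctness is immediate once each summand is computed exactly: the returned set attains the minimum of $\cost(P,\cdot)$ over all $k$-subsets of $C$. Ties are broken arbitrarily.

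\textbf{Step 1.} Enumerate the $k$-subsets of $C$ (e.g.\ lexicographically). This takes $O(|C|^k)$ time, plus $O(kl)$ per subset to assemble the base curves. Since $k$ and $l$ are constants, this is $O(|C|^k)$ overall.

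\textbf{Step 2.} For a fixed $B$ and each $x\in P$, invoke \Cref{lem: weights and traversal matrices} to compute $d_F(x,\spann(B))$. Each call takes $O(2^k m^{2k+2.5} l^{2k^3})$ time. With $k,l$ constant, the factors $2^k$ and $l^{2k^3}$ are constants, so each call costs $O(m^{2k+2.5})$. Summing over the $n$ curves gives $O(n m^{2k+2.5})$ per subset, and we keep the running minimum.

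\textbf{Step 3.} Multiplying by the number of subsets gives $O(|C|^k n m^{2k+2.5})$ in total, which is the claimed bound.

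The only point needing care is that \Cref{lem: weights and traversal matrices} returns the exact distance, including the optimal real weights. That theorem already establishes this via its arrangement-and-LP argument, so no approximation slack accumulates when summing over $x\in P$. One minor detail: a subset $B$ may contain a base curve that is identically zero. This causes no problem, because the projection algorithm handles arbitrary $B$; such a curve simply contributes nothing to the span. No step is a real obstacle here; the corollary is a direct composition of enumeration with the exact projection-distance algorithm.
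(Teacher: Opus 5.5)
Your proposal is correct and follows the paper's argument. You enumerate all $k$-subsets of $C$, evaluate each $\cost(P,B)$ exactly with \Cref{lem: weights and traversal matrices} once per input curve, and absorb the factor $2^k l^{2k^3}$ into the constant because $k$ and $l$ are fixed; this gives $O(|C|^k n m^{2k+2.5})$.
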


By first applying Theorem \ref{thm: dim_reduction} on the input, we obtain the following corollary.

\begin{corollary}\label{cor:main2}
Let $k,l\in \NN$ and $\varepsilon \in (0,1)$ be constants.
Given a set of $n$ time series $P \subset \RR^m$, a set of time series $C \subset \RR^l$ with $|C|\geq k$ and $B \subseteq C$ with $|B| = k$. We can compute a set of $k$ time series that approximates
$cost(P,B)$ within a factor of $(1+\varepsilon)$ in time $\tilde{O}\left(nm+|C|^k   n \right)$, where the $\tilde{O}$-notation hides terms that are polylogarithmic.  
\end{corollary}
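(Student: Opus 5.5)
The plan is to first replace every input time series by a short surrogate using Theorem~\ref{thm: dim_reduction}, and then run the exact candidate-set algorithm of Corollary~\ref{cor:projection} on the surrogates. Concretely, fix $\varepsilon' = \varepsilon/3$. For each $x \in P$ compute in $O(m\log^2 m)$ time a time series $z_x \in \RR^{d_0}$, with $d_0 = d_0(\varepsilon', l)$ a constant, such that
$(1-\varepsilon') d_F(x,y) \le d_F(z_x,y) \le (1+\varepsilon') d_F(x,y)$ for all $y \in \RR^l$. This costs $O(nm\log^2 m) = \tilde O(nm)$ in total. Let $P' = \{z_x : x \in P\}$.

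The key step is to transfer the guarantee from single curves $y \in \RR^l$ to the span $\spann(B)$. Elements of the span are curves of arbitrary length $t$, not of length $l$, so Theorem~\ref{thm: dim_reduction} does not apply to them directly. To handle this, I would decompose the distance as an inner minimum over weights: for fixed $w \in \RR^k$, the set of Fréchet combinations $\sum_j w_j M_j b_j$ with common $t$ is exactly what one obtains by traversing the "product" of $b_1,\dots,b_k$. The difficulty is that this is not a single curve in $\RR^l$; it is a curve of complexity up to $kl$, namely a monotone lattice path through $[l]^k$.

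So I would write $d_F(x,\spann(B)) = \min_{w}\min_{\pi} d_F(x, y_{w,\pi})$, where $\pi$ ranges over monotone staircase paths in $[l]^k$ (without repeated consecutive cells) and $y_{w,\pi}$ is the curve of length at most $k(l-1)+1$ whose entries are $\sum_j w_j (b_j)_{\pi_j}$. Checking this identity is routine. The matrices $M_j$ for the base curves jointly define a monotone path in $[l]^k$, and deduplicating consecutive repeats leaves a curve $y$ of length at most $kl$. The matrix $M_0$ paired with the expansion of $y$ then realizes exactly a Fréchet traversal between $x$ and $y$, by Observation~\ref{obs:frechetclosureinftydist}. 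A curve of length $L' < kl$ can be padded by repeating a vertex, which does not change Fréchet distances, so every $y_{w,\pi}$ may be regarded as lying in $\RR^{kl}$.

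I would therefore invoke Theorem~\ref{thm: dim_reduction} with parameter $l' = kl$, still a constant, instead of $l$. Taking the minimum over $(w,\pi)$ on both sides then gives $(1-\varepsilon')d_F(x,\spann(B)) \le d_F(z_x,\spann(B)) \le (1+\varepsilon')d_F(x,\spann(B))$ for every $B \subset \RR^l$ of size $k$. I expect this to be the main subtle point, since the complexity parameter must be chosen to cover the span rather than the base curves themselves.

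Next, run Corollary~\ref{cor:projection} on $P'$, which has curves of length $d_0 = O(1)$, with candidate set $C$. This returns $\hat B$ minimizing $\cost(P',\cdot)$ over $k$-subsets of $C$, in time $O(|C|^k n d_0^{2k+2.5}) = O(|C|^k n)$. For any $B \subseteq C$ with $|B| = k$, summing the per-curve bounds gives
\[
\cost(P,\hat B) \le \frac{1}{1-\varepsilon'}\cost(P',\hat B) \le \frac{1}{1-\varepsilon'}\cost(P',B) \le \frac{1+\varepsilon'}{1-\varepsilon'}\cost(P,B) \le (1+\varepsilon)\cost(P,B),
\]
where the last step uses $\varepsilon' = \varepsilon/3$ and $\varepsilon < 1$. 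The same chain shows that $\cost(P',\hat B)$ itself approximates the optimum within a $(1\pm O(\varepsilon))$ factor, which covers the "approximates the cost" reading of the statement; rescaling $\varepsilon$ absorbs the constants. The total running time is $\tilde O(nm + |C|^k n)$.
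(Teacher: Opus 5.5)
Your proposal is correct and follows the paper's route: compress each input curve with Theorem~\ref{thm: dim_reduction}, then run the exact candidate-set algorithm of Corollary~\ref{cor:projection} on the constant-length surrogates. The paper states this in one line. Your observation that every element of $\spann(B)$ is, up to vertex duplication, a curve with at most $k(l-1)+1$ vertices, so that the theorem must be invoked with complexity parameter $kl$ rather than $l$, is exactly the detail needed to make that line rigorous.
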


 \bibliography{SourceFiles/biblio} 
 \bibliographystyle{plainurl}

\newpage
\appendix

\section{Experimental validation of the problem definition}\label{sec:experiments}
To demonstrate the effect of the time series decomposition problem we conducted an experiment to compare it to the principal component analysis.

\subsection{Data set}
 We used the bike sharing data set of the UCI Machine Learning Repository \cite{bike_sharing_275}. The data set corresponds to a historical log  that originated from Capital Bikeshare system, Washington D.C., USA, and contains the number of casually rented bikes for each day in the years 2011 and 2012. Since bike sharing highly correlates with environmental and seasonal settings, i.e. day of the week, weather and season, we utilize this data to visualize the effect of the time series decomposition problem.

\begin{figure}[h]
  \centering
    \includegraphics[scale = 0.5]{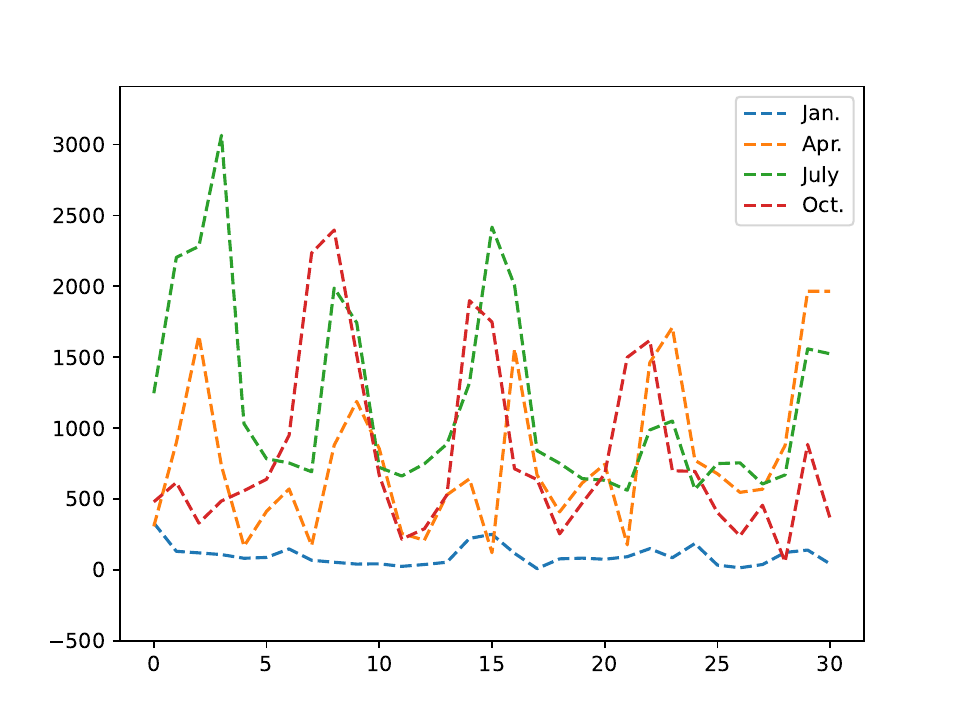}
     \caption{Casual rented bikes for sample months of the year 2011.}
     \label{fig:1}
\end{figure}

\subsection{Experimental setup}
 For the experimental setup we converted the aggregated data of casually rented bikes to time series for each month involved in the given timespan, refer to Figure~\ref{fig:1}. We heuristically solve the time series decomposition problem for $k=1$ on this input.
The algorithm is closely related to the work of Uebbing et al. \cite{uebbing2024alternative} (\hyperlink{https://github.com/jahoec/PCA-on-Curves}{https://github.com/jahoec/PCA-on-Curves}).
As in Section \ref{ssection:exactdecomposition} we restrict the choice of base curves to a predefined set of candidate time series.
This set of time series consists of approximate minimum error $\ell$-simplifications for all input time series and all values $\ell \in[2,31]$. We used a grid search (over the interval $[-2,2]$ with discrete steps of size $0.1$) to approximate the weights in the projection distance problem, thereby avoiding the costly arrangement computation.
Finally we consider a one sided variant of the discrete Fréchet distance, meaning that we only apply a traversal matrix to the base curve. Formally we consider the objective function $\sum_{x\in P} \min_{\substack{M, \\ w \in \RR}} \Vert x - M\cdot w \cdot b\Vert_\infty$. The reason for this restriction is to make the expanded base curve interpretable in the original timespan of the input time series.
For comparison we also applied principal component analysis on the time series. To make the principal component analysis applicable we extended all time series to the common complexity $31$ by duplicating their last vertex.
The base curve and principal component are both computed on the whole set of $24$ input time series.

The principal component analysis is computed as follows. Let $P  \in \RR^{24\times 31}$ be the given input time series represented as a matrix where each row corresponds to an individual month. Let $z \in \RR^{31}$ be the feature wise mean vector, i.e. for $i \in [31]$ let $z_i = \frac{1}{24}\sum_{j=1}^{24} P_{ji}$. Then let $P' = P - z$ and apply a singular value decomposition on $P'$, i.e. $P' = U\cdot \Sigma \cdot V^T$. The first principal component is given by scaling the first right singular vector with the first singular value, i.e.  $pc=\Sigma_{1,1}\cdot (V^T)_{1}$. We then get the projection of the $i$-th input time series by multiplying the principal component with the corresponding score given in the left singular vector and adding back the mean $z$, i.e. $x'_i = U_{1,i}\cdot pc + z$.

\subsection{Results}
Figure \ref{fig:2} shows the computed base curve and the reconstructed input time series and Figure \ref{fig:3} shows the first principal component and the reconstructed input time series by only using the first principal component. For visualization purposes only a sample of the input time series is plotted. 

The total error of both procedures is taken through the ratio $\sum_{x\in P} \Vert x - x'\Vert_2^2 / \sum_{x\in P} \Vert x\Vert_2^2$, where $x'$ denotes the reconstruction of $x$. The time series decomposition approach achieves a total error of $\approx14 \%$, while the principal component analysis approach has a total error of $\approx23 \% $. As such, the new approach clearly outperforms the PCA approach.

\begin{figure}[h]

    \begin{subfigure}[t]{0.48\textwidth}
        \centering
       \includegraphics[scale = 0.4]{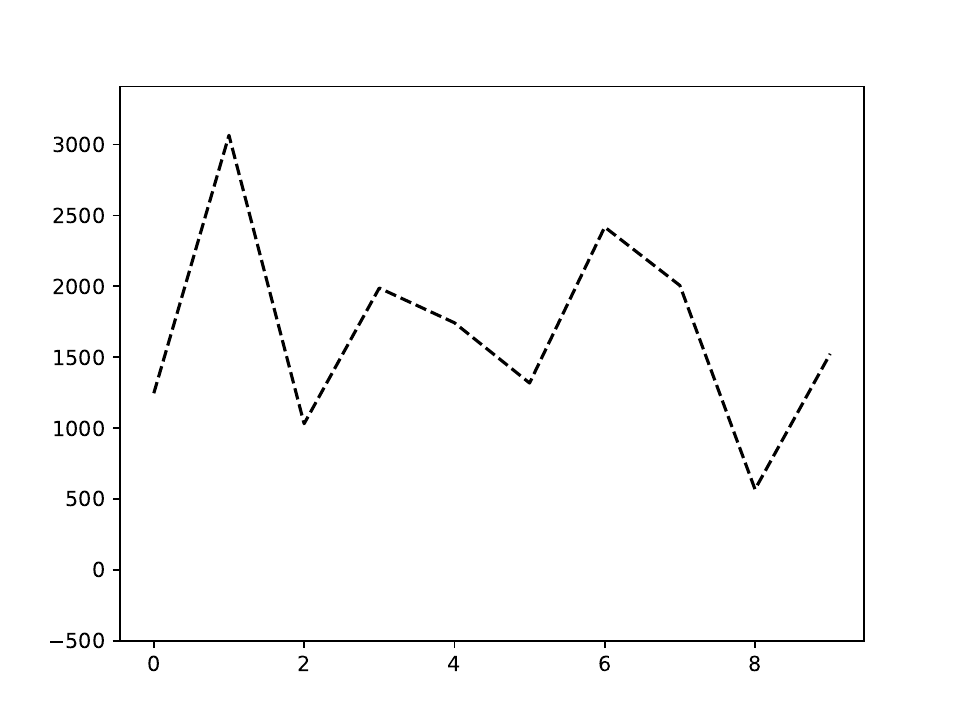}
        \label{fig:2.a}
         
    \end{subfigure} 
    \begin{subfigure}[t]{0.48\textwidth}
        \centering
       \includegraphics[scale = 0.4]{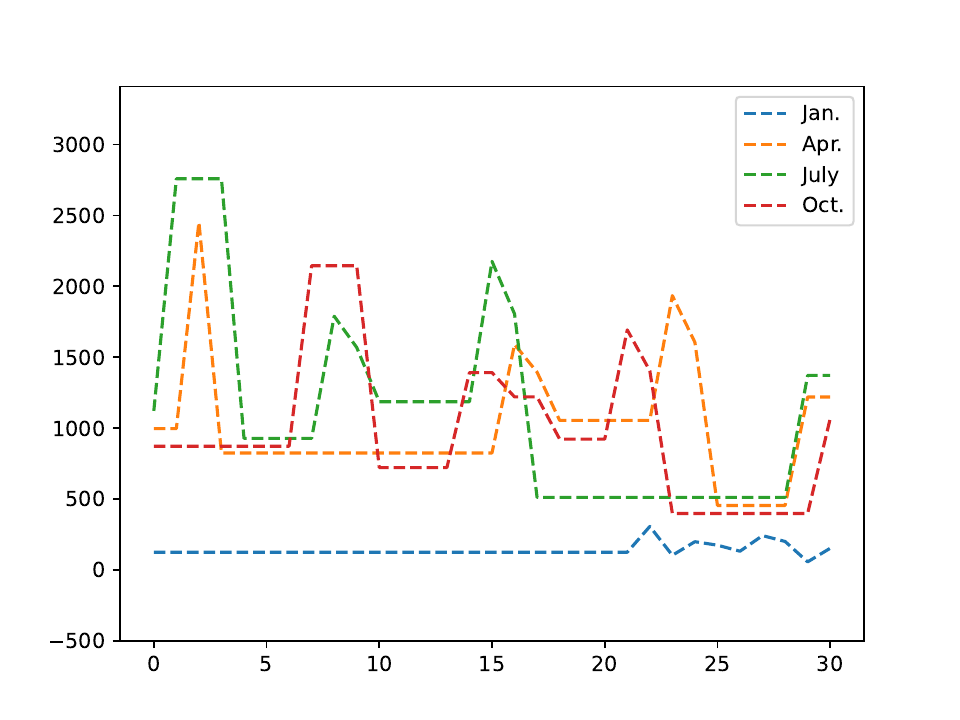}
         \label{fig:2.b}
    \end{subfigure} 
   
  \caption{The left plot shows the computed base curve. The right plot shows the reconstructed input time series by multiplying the corresponding weight and traversal matrix to the base curve.
  }
  \label{fig:2}
  \end{figure}
  
\begin{figure}[h]
\begin{subfigure}[t]{0.48\textwidth}
        \centering
       \includegraphics[scale = 0.4]{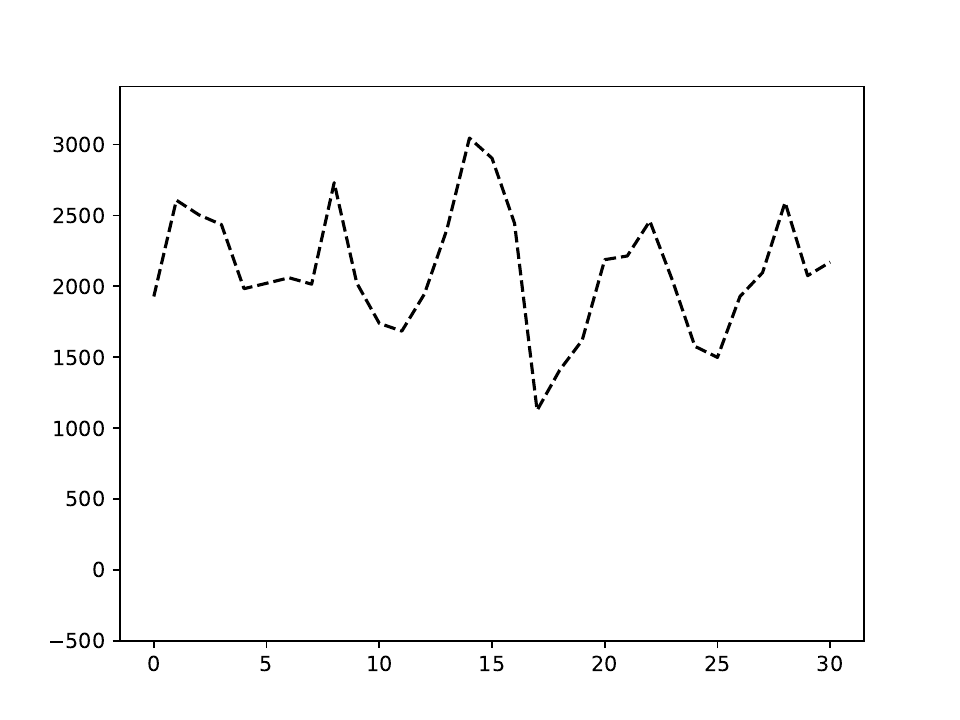}
         \label{fig:3.a}
    \end{subfigure} 
    \begin{subfigure}[t]{0.48\textwidth}
        \centering
       \includegraphics[scale = 0.4]{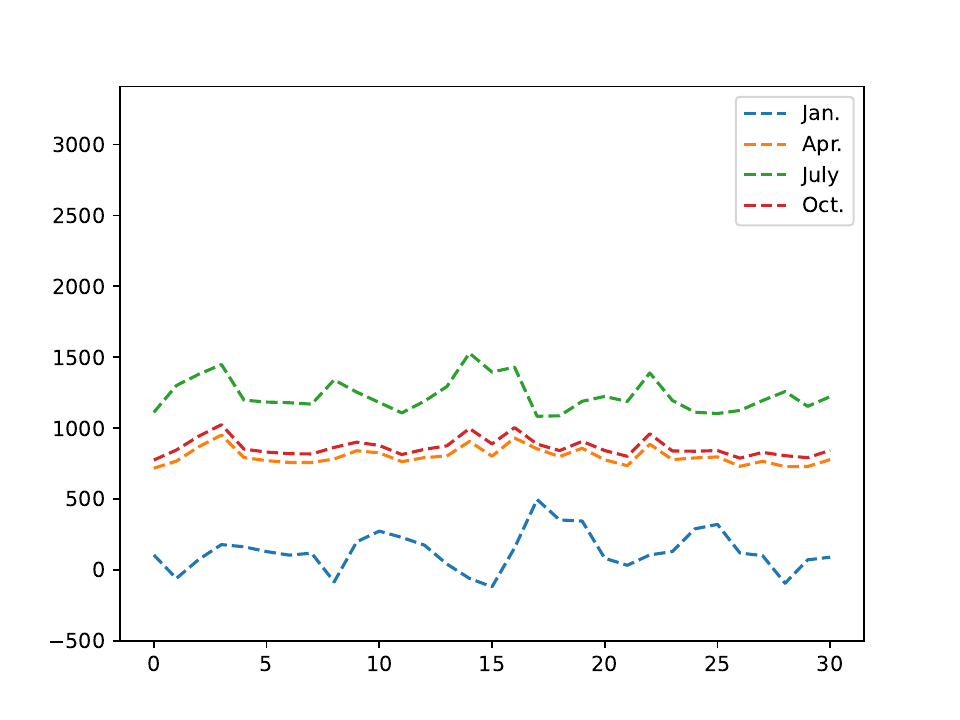}
         \label{fig:3.b}
    \end{subfigure} 
      \caption{
      The left plot shows the first principal component. The right plot shows the reconstruction of the time series by restricting to the first principal component.
     }   
     \label{fig:3}
\end{figure}

\newpage
\section{A (failed) greedy approach} \label{sec: greedy}
A natural approach to the Fr\'echet Decomposition problem is a greedy strategy that solves the problem iteratively by fixing the base curves computed in previous iterations and only optimizes one base curve at a time. Here, we first compute
\[b_1 \in \argmin_{b\in \RR^l} \sum_{x\in P} \min_{w \in \RR} d_F(x,w\cdot b) \] and continue with $k$-iterations, where in iteration $i$ we fix the previously computed base curves $b_1,\dots,b_{i-1}$ and compute \[b_i \in \argmin_{b\in  \RR^l} \sum_{x\in P} \min_{\substack{ w_1,\dots,w_i \\M_0,M_1,\dots, M_i}} \| M_0 x - \sum_{j=1}^{i-1} M_j w_jb_j   - M_i w_i b\|_{\infty} .\]

Consider the time series $x = (1,1,1)^T$ and $y = (1,0,1)^T$ as input, and let $l = 2$ be the complexity of the base curves.
For $k=2$ optimal base curves are $b_1^*= (1,0)^T$ and $b_2^*=(0,1)^T$, since there are traversal matrices $M_1,M_2$ and $M'_2$ such that
\[M_1b_1^* + M_2b_2^* =  (1,0,0)^T + (0,0,1)^T  = (1,0,1)^T\] and 
\[M_1b_1^* + M'_2b_2^* = (1,0,0)^T  + (0,1,1)^T = (1,1,1)^T.\]
Therefore, the optimal cost is $0$.

Next, we investigate the greedy strategy that first computes an optimal base curve for $k=1$ and then optimizes the second base curve by fixing the first one.
For $k = 1$ we argue that the set $A=\{(a,a)^T ~\mid~ a \in \RR\}$ contains all optimal base curves. For an arbitrary $b_1 \in A$ we have  $\min_{w\in{\RR}} d_F(x,wb_1) = 0$ and $\min_{w\in \RR} d_F(y,wb_1) = \frac{1}{2}$. Now assume a time series $(a,c)^T$ with $a \neq c$. Then $\min_{w\in{\RR}} d_F(x,w(a,c)^T) > 0$ and $\min_{w\in{\RR}} d_F(y,w(a,c)^T) \geq \frac{1}{2}$, therefore incurring larger cost than $b_1$. Now observe that for $k=2$ there is no time series of complexity $2$ that can construct input time series $y$ in combination with $b_1$, implying that the cost of the greedy strategy is larger than $0$.

\section{Missing Proofs of Section \ref{sec:constantfactor}}
\label{app:constantfactor}

\lemmaboundweights*

\begin{proof}
It holds that \[\min_{w \in \RR_{}} d_F(x,w\cdot b) \leq d_F(x,0\cdot b) = \Vert x \Vert_\infty.\]
 
Consider some arbitrary $w^* \in \argmin_{w \in \RR} \df(x,w\cdot b)$.
First assume $w^* < \frac{-2\Vert x \Vert_\infty}{\Vert b \Vert_\infty}$ then for some $\varepsilon > 0$ we get
\begin{align*}
\df\left(x,w^*\cdot b\right) &=
\df\left(x,\left( \frac{-2\Vert x \Vert_\infty}{\Vert b \Vert_\infty}-\varepsilon\right)\cdot b\right)\\
&\geq \left|\min(x) - \min \left( -1\left(\frac{2\Vert x \Vert_\infty}{\Vert b \Vert_\infty} + \varepsilon\right)\cdot  b \right )\right|\\
&\geq \left|\min(x)+\left(\frac{2\Vert x \Vert_\infty}{\Vert b \Vert_\infty} + \varepsilon\right) \max(b)\right|\\
&= \min(x) + 2\Vert x \Vert_\infty + \varepsilon \Vert b \Vert_\infty\\
&\geq\Vert x \Vert_\infty + \varepsilon \Vert b \Vert_\infty\\
&>\Vert x \Vert_\infty,
\end{align*} where the first inequality holds by Observation \ref{obs: sign_dependant_bounds}.
Next consider some $w^* > \frac{2\Vert x \Vert_\infty}{\Vert b \Vert_\infty}$. For some $\varepsilon > 0$ we get
\begin{align*}
\df\left(x,w^*\cdot b\right) &= \df\left(x,\left(\frac{2\Vert x \Vert_\infty}{\Vert b \Vert_\infty}+\varepsilon\right)\cdot b\right)\\
&\geq \left|\max(x)-\frac{2\Vert x \Vert_\infty}{\Vert b \Vert_\infty}\max(b)-\varepsilon\max(b)\right|\\
&= \left | -\Vert x \Vert_\infty - \varepsilon \Vert b \Vert_\infty \right |\\
&=\Vert x \Vert_\infty + \varepsilon \Vert b \Vert_\infty \\
&> \Vert x \Vert_\infty.
\end{align*}
Both assumptions on $w^*$ lead to a contradiction and therefore $ \frac{-2\Vert x \Vert_\infty}{\Vert b \Vert_\infty} \leq w^* \leq \frac{2\Vert x \Vert_\infty}{\Vert b \Vert_\infty} $.

\end{proof}

\lemmaapproxpositiveweights*

\begin{proof}
Let $w^* \in \argmin_{w \in \RR_{\geq 0}}\df(x,w \cdot b)$, $\Delta := \df(x,w^*\cdot b)$ and $\alpha \in \RR$ such that $w^* = \frac{\Vert x \Vert_\infty}{\Vert b \Vert_\infty} + \alpha$.
By Observation \ref{obs: sign_dependant_bounds} we get the following inequality: 
\begin{align*}
\Delta &=  \df(x, w^* \cdot b )\\
&=\df \left(x, \left(\frac{\Vert x \Vert_\infty}{\Vert b \Vert_\infty}+ \alpha \right)\cdot b \right) \\
&\geq \left| \max(x)  - \left(\frac{\Vert x \Vert_\infty}{\Vert b \Vert_\infty} + \alpha \right)\max(b)\right|\\
&= \left|\Vert x \Vert_\infty - \left(\frac{\Vert x \Vert_\infty}{\Vert b \Vert_\infty} + \alpha \right)\Vert b \Vert_\infty  \right|\\
&= |\alpha| \Vert b \Vert_\infty.
\end{align*}
We further get by the triangle inequality
\begin{align*}
d_F\left(x,\frac{\Vert x \Vert_\infty}{\Vert b\Vert_\infty} \cdot  b\right) 
&= d_F(x,(w^* -  \alpha) \cdot b) \\
&\leq d_F(x,w^* \cdot b) + |\alpha|\Vert b\Vert_\infty\\
&\leq 2   \df(x,w^* \cdot b).
\end{align*}

\end{proof}

\lemmaadditiveerrorswitchingweightsign*

\begin{proof}
By Observation \ref{obs: sign_dependant_bounds} it  holds that \[\df(x,|w|\cdot b) \leq \max \{|\max(x) -  |w|  \min(b)|, |\min(x) - |w|  \max(b)|\}\]
and
\[\df(x,w\cdot b) \geq \max \{|\max(x) +  |w|  \min(b)|, |\min(x) + |w|  \max(b)|\}.\]
We have 
\begin{align}
&|\max(x) - |w|  \min(b)|= \nonumber\\
&= |\max(x) - |w|  \min(b)| - |\max(x) + |w|  \min(b)| + |\max(x) + |w|  \min(b)| \nonumber\\
&\leq |\max(x) - |w| \min(b)| - |\max(x) + |w| \min(b)| + \df(x,w \cdot b)\nonumber\\
&\leq |\max(x) - |w| \min(b) - ( \max(x) + |w|   \min(b))| + \df(x,w \cdot b)\nonumber\\
&= 2\cdot |w| \cdot |\min(b)| + \df(x,w \cdot b).\label{eq:lem12a}
\end{align}
and
\begin{align}
&|\min(x) - |w| \max(b)|=\nonumber\\
&= |\min(x) - |w| \max(b)| - |\min(x) + |w|  \max(b)| + |\min(x) + |w|  \max(b)|\nonumber\\
&\leq |\min(x) - |w|  \max(b)| - |\min(x) + |w|  \max(b)| + \df(x,w \cdot b)\nonumber\\
&\leq |\min(x) - |w|  \max(b)  - (\min(x)  + |w|  \max(b) )| + \df(x,w \cdot b)\nonumber\\
&\leq 2 \cdot |w| \cdot |\max(b)|+ \df(x,w \cdot b). \label{eq:lem12b}
\end{align}
Therefore, by \eqref{eq:lem12a} and \eqref{eq:lem12b}, we have \[\df(x,|w|\cdot b) \leq 2\cdot |w| \cdot  \Vert b \Vert_\infty + \df(x,w \cdot b).\]
\end{proof}

\lemmaapproxweights*

\begin{proof}
Let $\beta \in \RR_{\geq 0}$ such that $\beta = \Vert x \Vert_\infty = \max(x)  = \Vert b \Vert_\infty = \max(b)$.
If there exists some $ w^* \in \argmin_{w \in \RR}\df(x,w \cdot b)$
with $w^* \geq 0$, then the proof follows by Lemma \ref{approx_positive_weights}. Therefore assume that this is not the case and consider some arbitrary $  w^* \in \argmin_{w \in \RR}\df(x,w \cdot b)$. By Lemma \ref{lem:bound_weights} we have $0 \geq w^* \geq -2$.
First, assume $-1 > w^* \geq -2$. Then there is some $\alpha \in (0,1]$ with $w^* = -1-\alpha$.
This gives the following inequality:
\begin{align*}
\Delta &:= \df(x,w^* \cdot b)\\
&\geq |\min(x) - w^* \max(b)|\\
&=|\min(x) - (-1 - \alpha) \max(b)|\\
&=|\min(x) + \beta + \alpha \beta|\\
&\geq \alpha \beta,
\end{align*}
where the first inequality holds by Observation~\ref{obs: sign_dependant_bounds} and the last inequality holds since $\min(x) \geq - \beta$.
We further get by triangle inequality \[\df(x,-1 \cdot b) \leq \df(x,w^* \cdot b) + |\alpha|\beta  \leq 2 \df(x,w^* \cdot  b).\] 

For the rest of the proof consider the case $0 > w^*\geq -1$. Then there is $\alpha\in [0,1)$ with $w^* = -1+\alpha$.  Assume  that $-\beta \leq \min(b) \leq -\beta/2$, which leads to the following inequality:
\begin{align*}
\Delta &:= \df(x,w^* \cdot  b)\\
&\geq |\max(x) - w^* \min(b)|\\
&=|\beta - w^* \min(b)|\\
&= |\beta - (-1+\alpha)\min(b)|\\
&= |\beta+ \min(b) - \alpha \min(b)|\\
&= \beta - |\min(b)| + \alpha |\min(b)|\\
&\geq \alpha|\min(b)|\\
&\geq \frac{\alpha \beta}{2 }.
\end{align*}
We further get by triangle inequality
\[\df(x,-1 \cdot b) \leq \df(x,w^* \cdot b) + |\alpha|\beta \leq  3  \df(x,w^* \cdot b).\] 
Finally, consider the case $\min(b) \geq - \beta/2$. Here, by the case assumptions, we have that $w^*\cdot \min(b) \in [-\beta, \beta/2]$ and therefore $\df(x,w^* \cdot  b )\geq |\max(x) - w^*\min(b)|\geq \beta/2$.
By Lemma \ref{approx_positive_weights} and Lemma \ref{additive_error_switching_weight_sign}  we now get
\begin{align*}
\df(x,b) &\leq 2 \min_{w \in \RR_{\geq 0}} \df(x,w \cdot  b)\\
&\leq 2\df(x,|w^*| \cdot   b)\\
&\leq 2 (\df(x,w^* \cdot  b) + 2|w^*|\beta)\\
&\leq 2(\df(x,w^* \cdot  b) + 2\beta)\\
&= 2(\df(x,w^*  \cdot b) + 4/2\beta)\\
&\leq 2(\df(x,w^*\cdot   b) + 4 \cdot \df(x,w^* \cdot  b))\\
&\leq 10  \df(x,w^* \cdot  b).
\end{align*}\end{proof}

\lemmaapproxconstantweights*

\begin{proof}
Let $w_i \in \argmin_{w \in \{-1,1\}} \df \left( \frac{x_i}{a_i},w  \cdot  b' \right)$ be arbitrary for $ i \in [n]$. Consider the following inequalities:
\begin{align*}
&\sum_{i = 1}^n |a_i| \min_{w \in \{-1,1\}} \df\left(\frac{x_i}{a_i}, w  \cdot  y_j\right) \leq  \sum_{i = 1}^n |a_i|  \df\left(\frac{x_i}{a_i}, \frac{w_i}{w_j}  \cdot y_j\right)\\
\intertext{(By triangle inequality)}
&\leq \sum_{i=1}^n |a_i| \left(  \df\left(\frac{x_i}{a_i}, \frac{w_i}{w_j} \cdot \frac{x_j}{a_j}\right) + \df\left(\frac{w_i}{w_j} \cdot  \frac{x_j}{a_j}, \frac{w_i}{w_j} \cdot y_j\right) \right)\\
&\leq \sum_{i = 1}^n |a_i| \left (  \df\left(\frac{x_i}{a_i}, w_i \cdot  b'\right) + \df\left(w_i \cdot b', \frac{w_i}{w_j} \cdot \frac{x_j}{a_j}\right)  + \df\left(\frac{w_i}{w_j}\cdot  \frac{x_j}{a_j},\frac{w_i}{w_j} \cdot y_j\right) \right)\\
&=\sum_{i = 1}^n |a_i| \left(  \df\left( \frac{x_i}{a_i}, w_i \cdot b'\right) +   \left|\frac{w_i}{w_j}\right| \left(\df\left(w_j \cdot  b',  \frac{x_j}{a_j}\right)  + \df\left( \frac{x_j}{a_j}, y_j\right) \right) \right)\\
&=\sum_{i = 1}^n |a_i| \left(  \df\left( \frac{x_i}{a_i}, w_i \cdot b'\right) +   \df\left(w_j \cdot  b',  \frac{x_j}{a_j}\right)  + \df\left( \frac{x_j}{a_j}, y_j\right)\right)\\
\intertext{(Since $y_j$ is a minimum-error $l$-simplification of $\frac{x_j}{a_j}$)}
&\leq \sum_{i = 1}^n |a_i| \left(\df\left( \frac{x_i}{a_i}, w_i \cdot  b'\right) +  2 \df\left(w_j\cdot  b',  \frac{x_j}{a_j}\right) \right)\\
\intertext{(By our choice of $j$)}
&\leq 3 \sum_{i = 1}^n |a_i| \cdot  \df\left( \frac{x_i}{a_i}, w_i \cdot  b'\right),
\end{align*} which concludes the proof.
\end{proof}

\lemoptimalisnear*

\begin{proof} For $x \in P$ let $x' = \frac{x}{\Vert x\Vert_\infty}$,  $P':=\left\{ x' \mid x \in P \right\}$ and let $\Delta^{\ast}$ be the cost of the optimal base curve $b^{\ast}$. 
For $b \in \{b',b^*\}$ we can, by Lemma \ref{lem: scale base time series}, assume that $\Vert b \Vert_\infty = 1$  and by Lemma  \ref{lem: approx_weights}, for any $x'\in P'$, we have:   \[ {\min_{w\in\{-1,1\}} \df(x',w\cdot b)} \leq 10\min_{w\in \RR}\df(x',w \cdot b).
\]
We will prove our statement by contradiction. Therefore, we assume that 
\begin{align}
    \min_{\substack{w\in \{-1,1\} } }\df(b',   w\cdot b^*) > \frac{40 \Delta'}{\sum_{x\in P} \|x\|_{\infty}}.\label{eq:contradictionassumption}
\end{align}
Now let $W(P')$ be the distribution with support $P'$, where each $x' \in P'$ is sampled with probability $\frac{\|x\|_{\infty}}{\sum_{y\in P}\|y\|_{\infty}}$. Notice that for any $b\in\RR^l$,  \begin{align}
    \bE_{x'\sim W(P')}\left[\min_{w\in \RR}\df(x',w \cdot b)\right] &= \frac{\sum_{x\in P} \|x\|_{\infty}\min_{w\in \RR}\df\left(\frac{x}{\|x\|_{\infty}},w\cdot b \right) }{\sum_{x\in P} \|x\|_{\infty}} \\
    &= 
    \frac{\sum_{x\in P}  \min_{w\in\RR}\df\left(x,w\cdot b \right) }{\sum_{x\in P} \|x\|_{\infty}} ,\label{eq:contradictionexpectation}
\end{align}
where the last equality holds by \Cref{lem: scale single input time series}. 
Let $C := \{x'\in P' \mid {\min_{w \in \{-1,1\}} \df(x',w \cdot b')} < \frac{20 \Delta'}{\sum_{y\in P}\|y\|_{\infty}} \}$.
By Markov's inequality and \eqref{eq:contradictionexpectation},
\begin{align}
\frac{1}{2} 
 &< \mathrm{Pr}_{x'\sim W(P')} \left[\min_{w \in \RR} \df(x',w \cdot b') < \frac{2 \Delta'}{\sum_{y\in P}\|y\|_{\infty}} \right] \nonumber \\
&\leq \mathrm{Pr}_{x'\sim W(P')} \left[{\min_{w \in \{-1,1\}} \df(x',w \cdot b')} < \frac{20 \Delta'}{\sum_{y\in P}\|y\|_{\infty}} \right] \label{eq:probapprox1}\\
 &=\mathrm{Pr}_{x'\sim W(P')} \left[x'\in C \right], \label{eq:probapprox2} 
\end{align}
where in \eqref{eq:probapprox1} we used \Cref{lem: approx_weights}. Now let $w_x^{\ast} \in \arg \min_{w\in\{-1,1\}} \df(x',w \cdot  b^{\ast})$ and let $w_x^{\prime} \in \arg \min_{w\in\{-1,1\}} \df(x',w \cdot  b^{\prime})$. 
Then, for all $x'\in C$, by the triangle inequality, our assumption \eqref{eq:contradictionassumption}, and since $\forall x,y\in\RR^d: \df(x,y)=\df(-x,-y)$, we have that 
\begin{align*}
 \df(x',w_x^* \cdot b^*) &\geq  \df(w_x' \cdot b',w_x^* \cdot b^*) - \df(x',w_x' \cdot b')\\
&> \frac{40 \Delta'}{\sum_{y\in P} \|y \|_{\infty}} - \frac{20 \Delta'}{\sum_{y\in P}\|y \|_{\infty}}\\
&= \frac{20 \Delta'}{\sum_{y\in P}\|y \|_{\infty}}, 
\end{align*}
which, by \eqref{eq:probapprox2} implies
\begin{align}\mathrm{Pr}_{x'\sim W(P')}\left[  \df(x', w_x^* \cdot b^*) > \frac{20 \Delta'}{\sum_{y\in P} \|y\|_{\infty}} \right] > \frac{1}{2}.\label{eq:beforecontradiction}\end{align} 
On the other hand, by Markov's inequality,  \eqref{eq:contradictionexpectation}, and \Cref{lem: approx_weights}, we have the following:
\begin{align*}
\frac{1}{2} &< 
\mathrm{Pr}_{x'\sim W(P')}\left[ \min_{w\in \RR}\df(x',w\cdot b^{\ast}) <\frac{2\Delta^{\ast}}{\sum_{y\in P} \|y\|_{\infty}}\right]\\
&\leq 
\mathrm{Pr}_{x'\sim W(P')}\left[ {\df(x', w_x^* \cdot b^*)} <\frac{20\Delta^{\ast}}{\sum_{y\in P} \|y\|_{\infty}}\right],
\end{align*} which contradicts \eqref{eq:beforecontradiction}. Therefore, $\min_{\substack{w\in \{-1,1\} } }\df(b',   w\cdot b^*) \leq \frac{40 \Delta'}{\sum_{x\in P} \|x\|_{\infty}}$. 
\end{proof}

\newpage
\section{Missing Proofs of Section \ref{sec:exact}}
\label{app:exact}

To show correctness of Algorithm \ref{alg: projection} we will use the following observations, which follow directly from the definition of traversal matrices.
\begin{obs}\label{obs: shrink traversal matrix}
        For a traversal matrix $M$ of dimension $(t,r)$  there exists a traversal matrix $M'$ of dimensions $(t-1,r-h)$ with $h \in \{0,1\}$ such that $M'(i,j) = M(i,j)$ for $1\leq i \leq t-1$ and $1\leq j \leq r-h$.
    \end{obs}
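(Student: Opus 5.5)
The statement is Observation~\ref{obs: shrink traversal matrix}: deleting the last row of a traversal matrix $M$ of dimension $(t,r)$ leaves a traversal matrix. I will take $M'$ to be the submatrix of $M$ formed by its first $t-1$ rows, restricted to the first $r-h$ columns. The shift $h$ is chosen according to where row $t-1$ of $M$ has its $1$: $h=1$ if $M(t-1,r-1)=1$, and $h=0$ if $M(t-1,r)=1$. The plan is to show that these are the only two possibilities and that in either case the restricted matrix satisfies all four axioms of a traversal matrix. Throughout I assume $t\ge 2$.

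\textbf{Step 1 (locating the $1$ in row $t-1$).} Since row $t-1$ of $M$ is a standard unit vector, say $M(t-1,j)=1$. If $j=r$ we are in the case $h=0$. If $j\le r-1$, axiom 4 gives $M(t,j)=1$ or $M(t,j+1)=1$. Because $M(t,r)=1$ and row $t$ is a unit vector, this forces $j+1=r$ or $j=r$, so $j=r-1$ and we are in the case $h=1$.

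\textbf{Step 2 (the $1$'s never leave the retained columns).} I show that every row $i\le t-1$ has its $1$ in a column $\le r-h$. By axiom 4 the column index of the $1$ is nondecreasing in the row index. It therefore suffices that the $1$ in row $t-1$ lies in a column $\le r-h$, and this holds by the choice of $h$. Consequently, restricting the first $t-1$ rows to the first $r-h$ columns loses no $1$'s, and every row of $M'$ is still a standard unit vector. This gives axiom 1.

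\textbf{Step 3 (remaining axioms).} Axiom 2 holds because $M'(1,1)=M(1,1)=1$, and axiom 3 holds because $M'(t-1,r-h)=1$ by the choice of $h$. For axiom 4, consider rows $i\le t-2$. If the $1$ of row $i$ is in a column $j<r-h$, the successor condition in $M$ transfers verbatim, since both candidate successor entries lie inside $M'$. If $j=r-h$, monotonicity together with the fact that row $t-1$ ends at column $r-h$ forces $M(i+1,r-h)=1$. This is exactly the clause of axiom 4 that applies to the last column of $M'$.

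The main point requiring care is this last-column clause of axiom 4. When $h=1$, the former column $r-1$ of $M$ becomes the last column of $M'$, and the original last-column rule of $M$ does not cover it. The monotonicity argument from Step 2 is what makes the case work. The remaining steps are routine checks of the definition.
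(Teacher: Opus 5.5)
Your argument is correct, and it is the natural spelled-out version of what the paper leaves implicit: the paper gives no proof beyond remarking that the observation follows directly from the definition of traversal matrices. The points that actually need checking are choosing $h$ by the position of the $1$ in row $t-1$, and using monotonicity of the column index to handle the last-column clause of axiom~4 when $h=1$ (that clause read, as intended, as ``if $M(i,r)=1$ then $M(i+1,r)=1$''), and you handle both.
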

\begin{obs}\label{obs: extend traversal matrix}
        Given traversal matrix $M$ of dimensions $(t,r)$ and $h \in \{0,1\}$ then $M'\in\{0,1\}^{(t+1)\times(r+h)}$ defined as

       \[ M'(i,j)=\left\{\begin{array}{lr}
        M(i,j),  &\text{for } 1\leq i \leq t, 1\leq j \leq r\\
        1, &\text{for } i = t+1,j = r+h\\
       0, &\text{else}
        \end{array}\right \}\]
        is a traversal matrix.
\end{obs}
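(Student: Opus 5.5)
The plan is to check the four defining conditions of a traversal matrix for $M'$ one by one. Each check follows from the corresponding property of $M$ together with the single appended row. Throughout I write $r' = r+h$ for the number of columns of $M'$ and $t+1$ for its number of rows. Since $M$ is a traversal matrix, its rows are standard unit vectors, $M(1,1)=1$, and $M(t,r)=1$; in particular row $t$ of $M$ equals $e_r$.

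\emph{Conditions 1--3.} For $i\le t$, row $i$ of $M'$ is row $i$ of $M$, padded by one zero entry when $h=1$, so it is still a standard unit vector. Row $t+1$ has exactly one $1$, at column $r+h$, so it is a standard unit vector as well. Next, $M'(1,1)=M(1,1)=1$, using $t,r\ge 1$. Finally, $M'(t+1,r')=M'(t+1,r+h)=1$ by construction.

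\emph{Condition 4.} I will distinguish rows $i\le t-1$ from the row $i=t$.

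For $i\le t-1$, suppose $M'(i,j)=1$. Then $j\le r$, because any column $r+1$ is zero in the first $t$ rows. If $j\le r-1$, the required successor entry $M(i+1,j)$ or $M(i+1,j+1)$ lies inside the copied block, so the property is inherited from $M$. If $j=r$, the last-column rule for $M$ gives $M(i+1,r)=1$. This is exactly what is needed in $M'$ whether column $r$ is still the last column ($h=0$) or not ($h=1$), since in the latter case $M'(i+1,j)=1$ is one of the two allowed successors. When $h=1$, column $r+1$ contains no $1$ in rows $1,\dots,t$, so the last-column rule for it is vacuous in those rows.

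For $i=t$, the only $1$ in row $t$ is at column $r$. If $h=0$, then $M'(t+1,r)=1$, which satisfies both the general rule and the last-column rule. If $h=1$, then $M'(t+1,r+1)=1$ is the diagonal successor, as required.

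The only real subtlety, and the step I expect to be the main obstacle, is the case $h=1$. There column $r$ stops being the last column, so the special last-column clause of $M$ has to be re-read as an instance of the general clause, as done above. Everything else is direct bookkeeping.
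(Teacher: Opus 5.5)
Your proposal is correct and takes essentially the paper's approach: the paper gives no written argument and just states that the observation follows directly from the definition of traversal matrices. Your condition-by-condition check is exactly that implicit verification, including the point that for $h=1$ the last-column clause for column $r$ becomes an instance of the general successor clause, and the clause for the new column $r+1$ is vacuous in rows $1,\dots,t$.
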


To simplify the notation in the proofs of this section, we introduce the following variation of the span, which fixes the weights. If the weights are not specified, we assume the unit weight setting.
 
\begin{definition} Let $Y=\{y_1,\dots, y_k\} \subset \RR^l$, and let $w=(w_1,\dots, w_k) \in \RR^k$ the Fréchet fspan of $Y$ is defined as
\[
\fspann(Y,w) = \bigcup_{t \in \NN } \left\{ \sum_{i=1}^k w_i x_i ~|~ x_i 
\in \closure_t(y_i) \right\}
\]
We may also write $\fspann(Y)$ where $w$ is defined to be the vector of all 1s. 
\end{definition}

\begin{observation}
The discrete Fr\'echet distance between $x$ and $\fspann(Y)$ can be written as 
\[
d_F(x,\fspann(Y)) = \min_{M_0,M_1,\dots, M_k} \| M_0 x - \sum_{i=1}^k M_i y_i\|_{\infty},
\]
where the minimum is over all traversal matrices $M_0,M_1,\dots, M_k$ of matching dimensions.
\end{observation}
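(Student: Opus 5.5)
The claim is that $d_F(x,\fspann(Y))$ equals the minimum of $\| M_0 x - \sum_{i=1}^k M_i y_i\|_\infty$ over all tuples of traversal matrices of matching dimension. I would prove this by unfolding the definitions and showing two inequalities. The only tool needed is Observation~\ref{obs:frechetclosureinftydist}, which rewrites the discrete Fréchet distance as a minimum of $\ell_\infty$ distances between elements of closures of equal length.

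For the direction ``$\le$'', fix traversal matrices $M_0\in\mathcal M^t_m$ and $M_i\in\mathcal M^t_l$ with a common $t$. Set $z=\sum_i M_i y_i$. Each $M_i y_i$ lies in $\closure_t(y_i)$, so $z\in\fspann(Y)$ with $z\in\RR^t$. Next I need $d_F(x,z)\le \|M_0x - z\|_\infty$. The identity matrix $I_t$ is a traversal matrix in $\mathcal M^t_t$, since its rows are unit vectors and it advances diagonally. Hence $z\in\closure_t(z)$ and $M_0x\in\closure_t(x)$, and Observation~\ref{obs:frechetclosureinftydist} gives the bound. Taking the infimum over $z$ gives the inequality.

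For the direction ``$\ge$'', take $z\in\fspann(Y)$ attaining $d_F(x,\fspann(Y))$; if the minimum is only an infimum, an $\epsilon$-argument works the same way. Write $z=\sum_i M_i y_i$ with $M_i\in\mathcal M^{t}_l$. By Observation~\ref{obs:frechetclosureinftydist} there are $s$, $A\in\mathcal M^s_m$ and $C\in\mathcal M^s_t$ with $d_F(x,z)=\|Ax - Cz\|_\infty$. Then $Cz=\sum_i (CM_i)y_i$. The key step is that a product of traversal matrices $C M_i\in\{0,1\}^{s\times l}$ is again a traversal matrix. Its rows are unit vectors, since each row of $C$ selects one row of $M_i$. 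The corner entries are preserved. Monotonicity holds because consecutive rows of $C$ select the same or consecutive rows of $M_i$, and consecutive rows of $M_i$ select the same or consecutive columns. So the tuple $(A, CM_1,\dots,CM_k)$ is admissible and attains value $d_F(x,z)$, which proves the inequality.

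The main obstacle is verifying closure under composition, in particular condition~4 of the definition. The delicate part is the boundary clause for the last column $j=l$, and checking that composing ``stay'' and ``advance'' steps never produces a jump of two. One must also check that the dimension $s\ge\max(m,l)$ is consistent with the matching-dimension convention; that holds because $s\ge t\ge l$ and $s\ge m$. Once composition is settled, the rest is bookkeeping.
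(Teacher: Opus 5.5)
Your proof is correct. The paper states this observation without proof, as immediate from the definitions. Your argument (the identity traversal $I_t$ for one inequality, closure of traversal matrices under composition for the other) is exactly the unpacking that justifies it.

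Two small simplifications. The composition check, including the last-column clause, becomes a one-liner if you view $M\in\mathcal M^t_l$ through its traversal identifier, i.e.\ a map $\mu\colon[t]\to[l]$ with $\mu(1)=1$, $\mu(t)=l$ and $\mu(q+1)-\mu(q)\in\{0,1\}$: then $CM$ corresponds to the composition of two such maps, which is again of this form. The $\epsilon$-hedge is also unnecessary, because every value $d_F(x,z)$ with $z\in\fspann(Y)$ lies in the finite set $\{|x_{i_0}-\sum_i (y_i)_{j_i}|\}$, so the minimum is attained.
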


The following two lemmas show correctness and running time of \Cref{alg: projection} leading to the proof of Theorem~\ref{lem: compute traversal matrices}.

\begin{restatable}{lemma}{lemmadynamicupdate}\label{lem: DF to span recursion}
For all $\vec{i} =(i_0,\ldots,i_k) \in [m]\times [l]^k$ it holds that \[D[\vec{i}] = d_F(x^{(i_0)}, \fspann (\{b_1^{(i_1)}, \dots, b_k^{(i_k)}\})).\]
\end{restatable}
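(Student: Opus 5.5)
We prove the claim by induction on $\vec{i}$ in lexicographic order, which is exactly the order in which \Cref{alg: projection} fills the table, so every predecessor $\vec{i}-h$ with $h\in H_{\vec{i}}$ is already filled and covered by the induction hypothesis. Throughout we use the observation above, which expresses $d_F(x^{(i_0)},\fspann(\{b_1^{(i_1)},\dots,b_k^{(i_k)}\}))$ as the minimum of $\Vert M_0x^{(i_0)}-\sum_j M_j b_j^{(i_j)}\Vert_\infty$ over traversal matrices $M_0,\dots,M_k$ of a common row dimension $t$.

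\textbf{Base case.} For $\vec{i}=(1,\dots,1)$ every traversal matrix onto a length-one vector is the all-ones column. Hence every admissible tuple produces the constant vector with entries $x_1-\sum_j (b_j)_1$, and its $\ell_\infty$ norm is the stored value $|x_1-\sum_j(b_j)_1|$.

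\textbf{Inductive step: upper bound ($\ge$ direction for $D$).} Let $\vec{i}\neq(1,\dots,1)$ and take optimal matrices $M_0,\dots,M_k$ with $t$ rows for the prefixes indexed by $\vec{i}$. By the last two conditions of the traversal-matrix definition, the last row of each $M_j$ is the unit vector $e_{i_j}$. So the last coordinate of the difference vector is $x_{i_0}-\sum_j(b_j)_{i_j}$, which is at most the optimum in absolute value.

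Now delete the last row of every matrix. We need $t\ge 2$, which holds because $\vec{i}\ne(1,\dots,1)$ forces some $i_j\ge 2$. By Observation~\ref{obs: shrink traversal matrix}, each truncated matrix is a traversal matrix onto the prefix of length $i_j-h_j$ with $h_j\in\{0,1\}$. The dropped column is exactly the one no longer used, so $h_j=1$ iff row $t-1$ of $M_j$ is not $e_{i_j}$.

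We must check that $h=(h_0,\dots,h_k)$ lies in $H_{\vec{i}}$. First, $i_j-h_j\ge1$ is automatic. Second, $h\neq 0$: if $h=0$, rows $t-1$ and $t$ would be identical in every matrix, and we can remove the duplicated last row. This gives a shorter tuple with no larger norm, so we may assume w.l.o.g. that optimal tuples have no duplicate consecutive rows across all $k+1$ matrices simultaneously. This is the one delicate point.

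The truncated tuple is feasible for $\vec{i}-h$, so its value is at least $D[\vec{i}-h]$ by induction. Therefore the optimum is at least $\max\{D[\vec{i}-h],|x_{i_0}-\sum_j(b_j)_{i_j}|\}$, which is at least the right-hand side of the recursion.

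\textbf{Inductive step: lower bound ($\le$ direction for $D$).} Take the $h\in H_{\vec{i}}$ attaining the minimum in the recursion, together with optimal matrices for $\vec{i}-h$, which exist by induction. Extend each matrix by Observation~\ref{obs: extend traversal matrix} with its own $h_j$. This appends the row $e_{i_j}$, yielding traversal matrices for the prefixes $\vec{i}$. The new difference vector is the old one with the entry $x_{i_0}-\sum_j(b_j)_{i_j}$ appended, so its $\ell_\infty$ norm equals the recursion value. Hence the optimum is at most $D[\vec{i}]$, and the two directions together give equality.

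The main obstacle is the truncation step, specifically ensuring that the decrement vector $h$ is nonzero and that the truncated matrices still satisfy the end condition $M(t-1,\cdot)=e_{i_j-h_j}$. Both follow from the monotonicity in condition 4 of the traversal-matrix definition, combined with the removal of duplicate rows, which preserves the traversal-matrix properties and does not increase the norm.
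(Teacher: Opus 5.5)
Your proof is correct and follows the paper's argument. Both proceed by induction in lexicographic order, truncate an optimal tuple via Observation~\ref{obs: shrink traversal matrix} to get one inequality, and extend an optimal tuple for the minimizing $h$ via Observation~\ref{obs: extend traversal matrix} to get the other; your explicit argument that the truncation vector $h$ is nonzero (choose an optimal tuple with no duplicated final row) fills in a detail the paper leaves implicit when it asserts $h^*\in H_{\vec{i}}$.
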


\begin{proof}
     For  $\vec{i},\vec{j} \in [m] \times [l]^k$ we write $\vec{j} \prec_l \vec{i}$ if $\vec{j}$ is a predecessor of $\vec{i}$ in the lexicographical ordering of $[m]\times [l]^k$.
    For $\vec{i} \in [m]\times [l]^k$ and $0\leq r\leq k$ let $M^{(\vec{i})}_r$ be traversal matrices that satisfy 
    \[d_F(x^{(i_0)}, \fspann( \{b_1^{(i_1)}, \dots, b_k^{(i_k)}\})) = \Vert M^{(\vec{i})}_0 x^{(i_0)} - \sum^k_{r=1} M^{(\vec{i})}_r b_r^{(i_r)}\Vert_\infty.\] 

    We now show the statement by induction over $\vec{i} \in [m]\times[l]^k$. 
   
    Note that for $\vec{i} = (1,\ldots,1)$ it trivially holds 
    \[D[\vec{i}] = |x_1 - \sum^k_{r=1} (b_r)_1| = d_F(x^{(1)},\fspann(\{b^{(1)}_1,\ldots,b^{(1)}_k\})),\] 
    since the time series all have complexity $1$.
    
    Consider some arbitrary $\vec{i} \in [m]\times [l]^k$ with $(1,\ldots,1) \prec_l \vec{i}$ such that for $\vec{j} \in [m]\times [l]^k$ with $ \vec{j}  \prec_l \vec{i}$ it holds
    \[D[\vec{j}] = d_F(x^{(j_0)}, \fspann(\{b_1^{(j_1)},\ldots,b_k^{(j_k)}\})).\]
    Note that for all $h \in H_{\vec{i}}$ we have $\vec{i}-h\prec_l \vec{i}$ and therefore
     \[D[\vec{i}-h] = 
    d_F(x^{(i_0-h_0)}, \fspann(\{b_1^{(i_1-h_1)}, \ldots, b_k^{(i_k-h_k)}\})).\]
    By \Cref{obs: shrink traversal matrix}, there exists $h^*  \in H_{\vec{i}}$ and traversal matrices $M'_r$ for $0\leq r \leq k$ that satisfy
   
\[\Vert M^{(\vec{i})}_0 x^{(i_0)} - \sum^k_{r=1} M^{(\vec{i})}_r b_r^{(i_r)}\Vert_\infty = \max \big\{\Vert M'_0 x^{(i_0-h^*_0)} - \sum^k_{r=1} M'_r b_r^{(i_r-h^*_r)}\Vert_\infty , |x_{i_0}-\sum^k_{r=1} (b_r)_{i_r}|\big\}.\]
We therefore get
\begin{align*}
    D[\vec{i}] &= \max \big \{ \min_{h \in H_{\vec{i}}} D[\vec{i}-h ], |x_{i_0}-\sum^k_{r=1}(b_r)_{i_j}| \big\}\\
    &\leq \max \big \{ D[\vec{i}-h^* ], |x_{i_0}-\sum^k_{r=1}(b_r)_{i_r}| \big\}\\
    &= \max \big \{ d_F(x^{(i_0-h^*_0)}, \fspann(\{b_1^{(i_1-h^*_1)}, \dots, b_k^{(i_k-h^*_k)}\})), |x_{i_0}-\sum^k_{r=1}(b_r)_{i_r}| \big\}\\
    &\leq  \max \big \{\Vert M'_0 x^{(i_0-h^*_0)} - \sum^k_{r=1} M'_r b_r^{(i_r-h^*_r)}\Vert_\infty, |x_{i_0}-\sum^k_{r=1}(b_r)_{i_r}| \big\}\\
    &= \Vert M^{(\vec{i})}_0 x^{(i_0)} - \sum^k_{r=1} M^{(\vec{i})}_r b_r^{(i_r)}\Vert_\infty\\
    &= d_F(x^{(i_0)}, \fspann( \{b_1^{(i_1)}, \dots, b_k^{(i_k)}\}))
\end{align*}
On the other hand let $h'$ minimize
$d_F(x^{(i_0-h_0)}, \fspann(\{b_1^{(i_1-h_1)}, \dots, b_k^{(i_k-h_k)}\}))$ 
over $h \in H_{\vec{i}}$. By \Cref{obs: extend traversal matrix}, there exist traversal matrices $M''_r$ for $0\leq r \leq k$ that satisfy 
\[\Vert M''_0 x^{(i_0)} - \sum^k_{r=1} M''_r b_r^{(i_r)}\Vert_\infty = \max \big\{\Vert M^{(\vec{i}-h')}_0 x^{(i_0-h'_0)} - \sum^k_{r=1} M^{(\vec{i}-h')}_r b_r^{(i_r-h'_r)}\Vert_\infty,
    |x_{i_0}-\sum^k_{r=1} (b_r)_{i_r}|\big\}.\]
Therefore,  
we get 
\begin{flalign*}
    D[\vec{i}] &= \max \big \{\min_{h\in H_{\vec{i}}}  D[\vec{i}-h], |x_{i_0}-\sum^k_{r=1}(b_r)_{i_r}| \big\} &&\\
    &=\max \big \{ \min_{h\in H_{\vec{i}}} d_F(x^{(i_0-h_0)}, \fspann(\{b_1^{(i_1-h_1)}, \ldots, b_k^{(i_k-h_k)}\})), |x_{i_0}-\sum^k_{r=1}(b_r)_{i_r}| \big\}&&\\
     &= \max \big \{ d_F(x^{(i_0-h'_0)}, \fspann(\{b_1^{(i_1-h'_1)}, \ldots, b_k^{(i_k-h'_k)}\})), |x_{i_0}-\sum^k_{r=1}(b_r)_{i_r}| \big\}&&\\
     &= \max \big\{\Vert M^{(\vec{i}-h')}_0 x^{(i_0-h_0')} - \sum^k_{r=1} M^{(\vec{i}-h')}_r b_r^{(i_r-h'_r)}\Vert_\infty ,|x_{i_0}-\sum^k_{r=1} (b_r)_{i_r}|\big\}&&\\
     &= \Vert M''_0 x^{(i_0)} - \sum^k_{r=1} M''_r b_r^{(i_r)}\Vert_\infty&&
\end{flalign*}

And since $d_F(x^{(i_0)}, \fspann( \{b_1^{(i_1)}, \dots, b_k^{(i_k)}\}) \leq \Vert M''_0 x^{(i_0)} - \sum^k_{r=1} M''_r b_r^{(i_r)}\Vert_\infty$, we have  \[d_F(x^{(i_0)}, \fspann( \{b_1^{(i_1)}, \ldots, b_k^{(i_k)}\}) \leq D[\vec{i}],\]  which concludes the proof.

\end{proof}

\begin{restatable}{lemma}{fixedweightsdp}\label{lem: unweighted projection error}
Given time series $x \in \RR^m$ and $b_1,\ldots, b_k\in \RR^l$, Algorithm \ref{alg: projection} computes in $O(2^{k}  \cdot m \cdot l^{k})$ time a multidimensional array $D\subset \RR^{m\times l^k}$ such that for all $\vec{i} \in [m]\times[l]^k$ it holds  \[D[\vec{i}] = d_F( x^{(i_0)} ,\fspann(\{b^{(i_1)}_1,\dots, b^{(i_k)}_k\})).\]
\end{restatable}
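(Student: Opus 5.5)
The statement has two parts: correctness of the table entries and the running time. I will treat them separately, and correctness is already essentially done.

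\textbf{Correctness.} This is exactly Lemma~\ref{lem: DF to span recursion}. The plan is only to check that Algorithm~\ref{alg: projection} fills $D$ in the order the induction needs. The loop in Line~\ref{line: loop positions} visits $[m]\times[l]^k$ in lexicographical order. For every $h\in H_{\vec{i}}$ we have $h\neq 0$ and $h\geq 0$ componentwise, so $\vec{i}-h$ is lexicographically smaller than $\vec{i}$. Hence every entry $D[\vec{i}-h]$ read in Line~\ref{line: dp_decision} has already been written.

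The constraint $i_j-h_j\ge 1$ in the definition of $H_{\vec{i}}$ keeps all accessed indices inside the table. For $\vec{i}\neq(1,\dots,1)$ some coordinate $i_j\ge 2$, so the corresponding unit vector lies in $H_{\vec{i}}$. Therefore $H_{\vec{i}}$ is nonempty and the minimum is well defined. With this, the value assigned at each cell coincides with the recursion analysed in Lemma~\ref{lem: DF to span recursion}, and that lemma gives $D[\vec{i}] = d_F( x^{(i_0)} ,\fspann(\{b^{(i_1)}_1,\dots, b^{(i_k)}_k\}))$ for all $\vec{i}$.

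\textbf{Running time.} Initialising the table in Line~\ref{line: init table} costs $O(m l^k)$, which is the number of cells. For each cell the algorithm does the following:
\begin{itemize}
\item It enumerates $H_{\vec{i}}\subseteq\{0,1\}^{k+1}\setminus\{0\}$, which has at most $2^{k+1}$ elements, and looks up the corresponding entries. This takes $O(2^{k})$ time, treating $k$ as a constant that may be absorbed into the constant factor.
\item It evaluates $|x_{i_0}-\sum_{j}(b_j)_{i_j}|$, which takes $O(k)$ time and is dominated by the $O(2^k)$ term.
\end{itemize}
Multiplying the per-cell cost by the number of cells gives $O(2^k\cdot m\cdot l^k)$.

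\textbf{Main obstacle.} The only step needing care is the bookkeeping of boundary cells: when some $i_j=1$, the coordinate $h_j$ is forced to $0$, so the minimum ranges over fewer than $2^{k+1}-1$ moves. This affects only the constant in the running time, and correctness is unaffected because the recursion in Lemma~\ref{lem: DF to span recursion} uses the same restricted set $H_{\vec{i}}$.
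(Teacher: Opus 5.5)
Your proposal is correct and follows the paper's proof: correctness is delegated to Lemma~\ref{lem: DF to span recursion}, and the running time is $O(ml^k)$ for initialization plus $O(ml^k)$ cells times $O(2^k)$ work per cell. Your extra checks only make explicit what that lemma's induction uses implicitly, namely that $\vec{i}-h$ precedes $\vec{i}$ lexicographically and that $H_{\vec{i}}$ is nonempty for $\vec{i}\neq(1,\dots,1)$.
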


\begin{proof}
The correctness of Algorithm \ref{alg: projection} follows from Lemma $\ref{lem: DF to span recursion}$. As for the running time it takes $O(m \cdot l^k)$ time to initialize $D$ and additionally, there are $O(m \cdot l^k)$ iterations, and each iteration takes $O(2^k)$ time.
\end{proof}

\end{document}